\newcommand{\diam}{\delta}
\newcommand{\bottle}{{b}}
\newcommand{\diamnorm}{\delta_*}
\newcommand{\ka}{{\tau_{\mathrm{e}}}}
\newcommand{\DS}{\beta_{_{DS}}}
\newcommand{\an}{\beta_{_{a}}}
\newcommand{\bcb}{\beta_{_{b}}}
\newcommand{\kDS}{{\tilde \kappa}}
\newcommand{\kbcb}{\kappa}
\newcommand{\dmax}{d_{_{\max}}}
\newcommand{\dmin}{d_{_{\min}}}
\newcommand{\In}{{\mathrm {In}}}
\newcommand{\Out}{{\mathrm {Out}}}
\newcommand{\len}[1]{| \gamma |_{_{#1}}}
\newcommand{\lenij}[1]{| \gamma_{i\, j} |_{_{#1}}}
\newcommand{\IR}{\mathds{R}}
\newcommand{\dG}{\mathds{G}}
\renewcommand{\leq}{\leqslant}
\renewcommand{\geq}{\geqslant}
\newcommand{\one}{\mathbf{1}}
\newcommand{\zero}{\mathbf{0}}
\newtheorem{thm}{Theorem}
\newtheorem{prop}[thm]{Proposition}
\newtheorem{lem}[thm]{Lemma}
\newtheorem{cor}[thm]{Corollary}
\newcommand{\N}{{\mathrm{N}}}
\newcommand{\V}{{\mathcal{V}}}
\newcommand{\W}{{\mathcal{W}}}
\newcommand{\Q}{\mathcal{Q}}
\newcommand{\drawloop}[1]{\draw[node_arrow] (#1) [loop above, looseness=20, out=70,in=110] to (#1);}
\newcommand{\drawarrows}[2]{
  \draw[node_arrow] (#2) [bend left=30] to (#1);
  \draw[node_arrow] (#1) [bend left=30] to (#2);
}
\title{Geometric Bounds for Convergence Rates of  Averaging Algorithms}
 \author{%
 Bernadette Charron-Bost}
 \date{ CNRS, \'Ecole polytechnique, 91128 Palaiseau, France\\~\\
 \today
  }
\begin{document}
\maketitle

~\vspace{-0.7cm}
\begin{abstract}
We develop a generic method for bounding the convergence rate  of an averaging algorithm 
	running in a multi-agent system with a time-varying network, where  the associated 
	stochastic matrices have a time-independent Perron vector.
This method provides bounds on convergence rates that  unify and refine most of the previously known bounds.
They  depend on geometric parameters of the dynamic communication graph 
	such as the normalized diameter or the bottleneck measure.
	
As corollaries of these geometric bounds, we show that the convergence rate of the Metropolis algorithm in a system of 
	$n$ agents is less than $1-1/4n^2$ with any communication graph that may vary in time, but is permanently 
	connected and  bidirectional.
We prove a similar upper bound for the EqualNeighbor algorithm under the additional assumptions that the number
	of neighbors of each agent is constant and that the communication graph is not too irregular.
Moreover our bounds offer improved convergence rates for several averaging algorithms and specific families of communication 
	graphs.	
  
Finally  we extend our methodology to a time-varying Perron vector and show
	how  convergence times may dramatically degrade  with even limited 
     variations of Perron vectors.
\end{abstract}
~\vspace{-0.7cm}

\section{Introduction}

Motivated by the  applications of the Internet and the development of mobile devices with communication 
	capabilities, the design of distributed algorithms for networks with a swarm of agents and time-varying connectivity 
	has been the subject of much recent work. 
The algorithms implemented in such dynamic networks ought to be decentralized, using local information, 
	and resilient to mobility and link failures while remaining efficient.

One of the basic problems arising in multi-agent networked systems is an agreement problem, 
	called  \emph{asymptotic  consensus}, or just \emph{consensus}, in which agents are required to  
	compute values that become infinitely close to each other.
For example, in clock synchronization, agents attempt  to maintain a common time scale; 
	or sensors may try to agree on estimates of a certain variable; 
	or vehicles may attempt to align their direction of motions with their neighbors in coordination of UAV's and
	control formation. 
	
\subsection{Network model and averaging algorithms}	

Let us consider a fixed set of agents that operate synchronously and communicate by exchanging values
	 over an underlying time-varying communication network.
In the  consensus problem, the objective is to design distributed algorithms in which the agents 
	start with different initial values and reach agreement on one value that lies in the range of the initial values.
The term of {\em constrained  consensus}  is used when the goal is to compute a specific value in this range 
	(e.g., the average of the initial values).
	
Natural candidates for solving  the consensus problem are the 
	\emph{averaging algorithms} in which each agent maintains a  scalar variable that it
	repeatedly updates to a  convex combination of its own value and of the values it has just received from its neighbors.
The weights used by an agent can only depend on local informations  available to this agent. 
The matrix formed with the weights at each time step of an averaging algorithm is a stochastic matrix, 
	and  the graph associated to the stochastic matrix coincides with the communication graph. 
Hence, in the discrete-time model, every execution of an averaging algorithm determines a sequence of stochastic matrices.

Every averaging algorithm corresponds to a specific rule for computing the weights.
Three averaging algorithms are of particular interest, namely the \emph{EqualNeighbor} algorithm with weights equal to
	the inverse of the degrees in the communication graph, its space-symmetric version called \emph{Metropolis}, and the 
	\emph{FixedWeight} algorithm  which is a time-uniformization of the  EqualNeighbor algorithm
	in the sense that each agent uses some bound on its degree instead of its (possibly  time-varying) degree.
A specific feature of the Metropolis algorithm is to address the constrained consensus problem with  convergence on
	 the average of the initial values.

The convergence of averaging algorithms has been proved under various assumptions on the connectivity of the 
	communication graph, in particular when it is time-varying but permanently connected~\cite{Mor05,CMA08a}.
The goal in this paper is to establish novel and tight bounds on the convergence rates of averaging algorithms 
	that depend on geometric parameters of the communication graph.
As demonstrated in the simple case of a fixed communication graph and fixed weights, the convergence rate involves 
	the second largest singular values of the corresponding stochastic matrices.
Thus a primary step is to develop geometric bounds of these singular values and to get some control on the successive associated 
	eigenspaces.

\subsection{Contribution}
In this paper, our first contribution concerns upper bounds on the second largest eigenvalue  of a reversible
	stochastic matrix.
We start with an analytic bound and then develop a geometric bound.
This second bound compares well with previous geometric bounds derived through Cheeger-like inequalities 
	or Poincar\'e inequalities, and is often much easier to compute.
We derive geometric bounds on the second largest singular value of a reversible stochastic matrix.
We also obtain an analytic bound on the second largest singular value that is weaker, but still holds 
	when the matrix is not reversible.

Our second contribution is a generic method for bounding the convergence rate of an execution of an averaging algorithm 
	when  the associated stochastic matrices have all the same Perron vector.
Combined with the above bounds on the second largest singular value of stochastic matrices, this method provides
	 bounds on convergence rates that 
	 unify and refine most of the previously known bounds.
Basically, the approach consists in masking time fluctuations of the network topology  by a constant Perron vector.
Two typical examples implementing this strategy for coping with time-varying topologies are the \emph{Metropolis} algorithm
	and the \emph{FixedWeight} algorithm.
Using the geometric bounds developed herein, our method offers improved convergence rates of these algorithms for
	large classes of communication graphs.

We show that for any time-varying topology that is permanently connected and bidirectional,
	the convergence rate of the Metropolis algorithm is at most  $1-1/4n^{2} $, where $n$ is the number of agents.
As a byproduct, we obtain that  the second largest eigenvalue of the random walk on a connected regular bidirectional graph
	is in $1-O(n^{-2})$.
A similar result holds for the EqualNeighbor algorithm with limited degree fluctuations over both  time and space:
	the convergence rate is less than $1-1/\,(3 + \dmax - \dmin) \, n^2$ if each agent has a constant number of neighbors
	in the range $[\dmin,\dmax]$.
These two quadratic bounds exemplify the performance of the Poincar\'e inequality developed by Diaconis and Stroock~\cite{DS91}.

Finally, we extend our methodology to a time-varying Perron vector:
	we provide a heuristic analysis of the convergence rates of averaging algorithms that demonstrates 
	how time-fluctuations of Perron vectors
	 may lead to exponential degradation of convergence times.
Our approach consists in replacing the Euclidean norm associated to the Perron vector by the generic semi-norm 
	$\N(x) = \max (x_i) - \min (x_i) $
	defined on $\IR^n$, which  does not depend on Perron vectors anymore.
	
\paragraph{Related work.}

Several geometric bounds on the second largest eigenvalue and the second largest singular value of a reversible 
	stochastic matrix have been previously developed  (e.g., see~\cite{SJ89,Sin92,DS91,Lub89}).
Our geometric bound expressed in terms of the \emph{normalized diameter} of the associated graph is novel to the best of our knowledge.
The analytic bound presented in this paper is a generalization of the bound developed by Nedi\'c et al. for doubly stochastic 
	matrices~\cite{NOOT09}.

Concerning the convergence rate of averaging algorithms, there is also considerable literature.
Let us cite the bound established by Xiao and Boyd for the Metropolis algorithm  on a fixed topology~\cite{XB04},
	the one developed by Cucker and Smale for modelling formation of flocks in a complete graph~\cite{CS07},
	the bound by Olshevsky and Tsitsiklis which concerns the EqualNeighbor algorithm with fixed degrees~\cite{OT11,OT13},
	the analytic bound developed by Nedi\'c et al.~\cite{NOOT09}  in the case of doubly stochastic matrices (and hence, 
	with the typical application to the Metropolis algorithm), 
	and the one developed by Chazelle~\cite{Cha11}  for the FixedWeight algorithm.
	
From the quadratic bound on the hitting time of Metropolis walks established by Nonaka et al.~\cite{NOSY10}, 
	Olshevsky~\cite{Ols17} deduced that the convergence rate of the \emph{Lazy Metropolis} algorithm 	
	in any system of $n$ agents connected by a fixed bidirectional communication graph
	is less than $1-1/71 n^{2} $.
Our general quadratic bound for the Metropolis algorithm is obtained with a different approach
	based on the discrete analog of the Poincar\'e inequality developed by Diaconis and Strook~\cite{DS91}.
Applied to Lazy Metropolis, our approach  gives the improved bound of $1-1/8 n^{2} $.
It also  proves that the 
	quadratic time complexity result in~\cite{Ols17} extends to the case of  time-varying topologies.

The case of time-varying Perron vectors is addressed by Nedi\'c and Liu~\cite{NL17} with a  different 
	method than ours: instead of dealing with the sequence of Perron vectors 
	 and  using the non-Euclidean norm $\N$, they consider 
	the \emph{absolute probability sequence} associated with the sequence 
	of stochastic matrices~\cite{Kol36} and the sequence of associated Euclidean norms.

\section{Preliminaries on stochastic matrices}

\subsection{Notation}

Let $n$ be a positive integer and  let $[n] = \{1,\dots,n \}$.
For every  positive probability vector $\pi \in \IR^n$, we define
	$$<x,y\!>_{\pi} = \sum_{i\in [n]} \pi_i \, x_i \, y_i , $$
	that is a positive definite inner product on $\IR^n$.
The associated Euclidean norm is denoted by~$ \lVert . \lVert_{\pi}$.

For any $n\times n$ square matrix $P$,  $P^{\dag_{\pi}}$ denotes the adjoint of $P$
	with respect to the inner product $<.\, ,.>_{\pi}$.
We easily check that 
	$$P^{\dag_{\pi}}_{i j} = \frac{\pi_j}{\pi_i} P_{j i} .$$ 
Equivalently, 
	$$ P^{\dag_{\pi}} = \delta_{\pi}^{-1} P^{\mathrm{T}} \delta_{\pi} $$
	where $\delta_{\pi}= \rm{diag} (\pi_1,\dots, \pi_n)$ and $P^{\mathrm{T}}$ is $P$'s transpose.

The real vector space generated by $\one = (1, \dots, 1)^T$ is denoted by $\Delta = \IR . \one$, and
	$\Delta^{\bot_{\pi}}$ is the orthogonal complement of  $\Delta $ in $\IR^n$ for the inner product $<.,.>_{\pi}$.
Clearly,  $ \lVert \one \lVert_{\pi} = 1$.

Another norm on $\Delta^{\bot_{\pi}}$ is provided by the restriction to $\Delta^{\bot_{\pi}}$  of the semi-norm $\N$  on $\IR^n$ defined by 
	$$\N(x) = \max_{i \in [n] } (x_i) - \min_{i \in [n] }(x_i) . $$

\subsection{Reversible stochastic matrices}

Let $P$ be a stochastic matrix of size $n$, and let $G_{\! _P}$   denote the directed graph associated to~$P$.
We assume throughout that $P$ is {\em irreducible}, i.e.,  $G_{\! _P}$ is strongly connected.
The Perron-Frobenius theorem shows that  the spectral radius of $P$, namely 1, is an eigenvalue of $P$
	of geometric multiplicity one.
Then $P$ has a unique Perron vector, that is, there is a unique positive probability vector~$ \pi_{_P }$ such that 
	$P^{\mathrm{T}} \, \pi_{_P } = \pi_{_P } $.
The matrix  $P^{\dag_{\pi_{_P}}}$, simply denoted $ P^{\dag}$, is stochastic.
Indeed, $$ \left (  \delta_{\pi_{_P }}^{-1} P^{\mathrm{T}} \delta_{\pi_{_P }} \right )  \one = 
                                              \left (  \delta_{\pi_{_P }}^{-1} P^{\mathrm{T}} \right )  \pi_{_P } =  \delta_{\pi_{_P }}^{-1}  \pi_{_P } = \one.$$
Therefore,  $ \Delta^{\bot_{\pi_{_P }}} $, denoted  $ \Delta^{\bot_{P}} $ for short, is stable under the action of~$P$.
Moreover  the two matrices $P$ and $P^{\dag}$ share the same Perron vector.

The matrix $P$ is said to be $\pi$-{\em self-adjoint}   if $ P^{\dag_{\pi}} = P$.	
A simple argument based on the unicity of the Perron vector of an irreducible matrix shows that
	if $P$ is  $\pi$-self-adjoint, then $\pi$ is $P$'s Perron vector, i.e., $\pi = \pi_{_P }$.
In this case, the matrix $P$  is said to be \emph{reversible}.

\subsection{A formula {\em \`a la Green}}

We start with an equality that is a generalization of Green's formula.

\begin{prop}\label{pro:greengen}
Let $\pi$ be any positive probability vector in $\IR^n$, and let  $L$ be a  square matrix of size~$n$.
If $L$ is $\pi$-self-adjoint and  $\one \in \ker (L)$, then for all vector $x\in \IR^n$, it holds that
	$$ \langle x,L.x \rangle _{\pi} = - \frac{1}{2}  \sum_{i \in[n]} \sum_{ j \in [n]}  \pi_i \, L_{i,j}  \, (x_i - x_j)^2 \, .$$ 
\end{prop}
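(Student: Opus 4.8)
The plan is to prove the identity by directly expanding the right-hand side and using each of the two hypotheses to annihilate one of the resulting square terms. First I would translate the hypotheses into coordinate form. The condition $\one \in \ker(L)$ says precisely that every row of $L$ sums to zero, i.e.\ $\sum_{j \in [n]} L_{i,j} = 0$ for each $i$. The $\pi$-self-adjointness $L^{\dag_{\pi}} = L$, combined with the formula $L^{\dag_{\pi}}_{i,j} = \tfrac{\pi_j}{\pi_i} L_{j,i}$ recorded earlier, is exactly the \emph{detailed-balance} relation $\pi_i L_{i,j} = \pi_j L_{j,i}$ for all $i,j$.

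Next I would expand the square, writing $(x_i - x_j)^2 = x_i^2 - 2 x_i x_j + x_j^2$, and split the double sum into three pieces. The cross term contributes $\sum_{i,j} \pi_i L_{i,j}\, x_i x_j$, which is exactly $\langle x, L.x \rangle_{\pi} = \sum_i \pi_i x_i (L x)_i$; so the prefactor $-\tfrac12$ against the $-2$ will reproduce the left-hand side, and the whole task reduces to showing that the two ``square'' pieces $\sum_{i,j}\pi_i L_{i,j}\, x_i^2$ and $\sum_{i,j}\pi_i L_{i,j}\, x_j^2$ both vanish.

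For the first piece, factoring out $\pi_i x_i^2$ and summing over $j$ gives $\sum_i \pi_i x_i^2 \big(\sum_j L_{i,j}\big) = 0$ by the row-sum condition. For the second piece the row-sum condition does not apply directly (columns of $L$ need not sum to zero), and this is where self-adjointness is essential: the detailed-balance relation makes the weight $\pi_i L_{i,j}$ symmetric in the pair $(i,j)$, so relabelling $i \leftrightarrow j$ converts $\sum_{i,j}\pi_i L_{i,j}\, x_j^2$ into $\sum_{i,j}\pi_i L_{i,j}\, x_i^2$, the already-vanishing first piece. Collecting terms yields the claimed formula.

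The computation is essentially routine; the only point requiring care---the closest thing to an obstacle---is recognizing that the two hypotheses play genuinely distinct roles, with $\one \in \ker(L)$ killing the $x_i^2$ term and self-adjointness needed to reduce the $x_j^2$ term to it. One could phrase the same argument more symmetrically by observing that the summand on the right-hand side is invariant under the swap $i \leftrightarrow j$, since $\pi_i L_{i,j}$ is symmetric and $(x_i - x_j)^2$ manifestly so; I would present whichever of the two versions reads more cleanly.
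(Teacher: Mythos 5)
Your proof is correct and follows essentially the same route as the paper's: expand $(x_i-x_j)^2$, identify the cross term with $\langle x, L.x\rangle_{\pi}$, kill one square term via the zero row sums from $\one\in\ker(L)$, and reduce the other to it via the detailed-balance symmetry $\pi_i L_{i,j}=\pi_j L_{j,i}$. The only (cosmetic) difference is bookkeeping: the paper first discards the diagonal terms and so each square piece equals $-\sum_i \pi_i L_{i,i} x_i^2$ rather than zero, whereas you keep the full double sum and the square pieces vanish outright, which reads slightly more cleanly.
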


\begin{proof}
First we observe that
	$$\begin{array}{lcl}
	\sum_{ i\,,\,j } \pi_i \, L_{i \, j} \, (x_i - x_j)^2  & = &  \sum_{i \neq j} \pi_i \, L_{i \, j} \, (x_i - x_j)^2 \\ \\
		 & = & \sum_{i \neq j } \pi_i \, L_{i \, j} \, x_i^2 +  \sum_{i \neq j} \pi_i \, L_{i \, j} \,  x_j^2 
					 -2  \sum_{i \neq j} \pi_i \, L_{i \, j} \, x_i \, x_j .  
				\end{array}	 $$
Because of the assumptions on  $ L $, the first two terms are both equal to $ - \sum_{i \in [n] }  \pi_i \, L_{i \, i} \,  x_i^2 $
	and so 
	$$ \sum_{i,\, j } \pi_i \, L_{i \, j} \, (x_i - x_j)^2 = 
			-2 \left(  \sum_{i } \pi_i \, L_{i \, i} \,  x_i^2 +   \sum_{i \neq j} \pi_i \, L_{i \, j} \, x_i \, x_j  \right) . $$
Besides, we have
	$$ \langle x, Lx \rangle_{\pi} =  \sum_{i ,\, j }  \pi_i \, L_{i \, j} \, x_i \, x_j 
						     =	 \sum_{i }  \pi_i \, L_{i \, i} \,  x_i^2 +   \sum_{i \neq j} \pi_i \, L_{i \, j} \, x_i \, x_j  $$
	and the lemma follows.
\end{proof}

\subsection{Norms on $\Delta^{\bot_{\pi}}$}

As an immediate consequence of the above proposition, we obtain that if $P$ is a  reversible stochastic matrix,
	then  the quadratic form
	$$\Q_{P}(x) = \langle x, x-  P x \rangle_{\pi_{_P }}  $$ 
	 is non-negative and its restriction to $\Delta^{\bot_{P}}$  is positive definite. 	
Moreover, $P$ has $n$ real eigenvalues $\lambda_1(P) , \dots, \lambda_n(P)$ that satisfy 
	$$ -1  \leq  \lambda_n (P) \leq \dots \leq  \lambda_2 (P)  \leq \lambda_1 (P) = 1 . $$
The Perron-Frobenius theorem shows that if, in addition, $P$ has a positive diagonal entry, then the first and 
	the last inequalities are strict.
	
Besides, we obtain the classical minmax characterization of the eigenvalues of reversible stochastic matrices.

\begin{lem}\label{lem:gamma}
Let $P$ be any reversible stochastic matrix, and let $\pi$ be its Perron vector.
For any positive real number $\gamma$, the two following assertions are equivalent
\begin{enumerate}
\item $ \lambda_2 (P)  \leq 1-\gamma$;
\item $\forall x\in \Delta^{\bot_{P}}, \ \Q_{P} (x) \geq \gamma  \,  \lVert x  \lVert^2_{\pi} $.
\end{enumerate}
In other words, 
	$  \lambda_2 (P)  = 1 -  \inf_{x\in \Delta^{\bot_{P}}\setminus \{ \zero\}}  \frac{ \Q_{P} (x)}{  \lVert x  \lVert^2_{\pi} }$.
\end{lem}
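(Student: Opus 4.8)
The plan is to reduce both assertions to the Rayleigh--quotient characterization of the eigenvalues of $I-P$ on the invariant subspace $\Delta^{\bot_P}$. Since $P$ is reversible it is $\pi$-self-adjoint with $\pi = \pi_{_P}$, and $\langle\,\cdot\,,\cdot\,\rangle_{\pi}$ is a genuine (positive definite) inner product; hence the real spectral theorem applies and furnishes an orthonormal basis $(v_1,\dots,v_n)$ of $\IR^n$ consisting of eigenvectors of $P$, which I would order so that $v_1 = \one$ with $Pv_1 = v_1$ and $\lambda_1(P) = 1 \ge \lambda_2(P) \ge \dots \ge \lambda_n(P)$. Because $\Delta = \IR.\one$ is spanned by $v_1$ and the basis is $\pi$-orthonormal, the orthogonal complement $\Delta^{\bot_P}$ is exactly the span of $v_2,\dots,v_n$.

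Next I would diagonalize the quadratic form $\Q_P$ in this basis. Writing an arbitrary $x \in \Delta^{\bot_P}$ as $x = \sum_{i=2}^{n} c_i\, v_i$ and using $\Q_P(x) = \langle x, (I-P)x\rangle_{\pi}$ together with $(I-P)v_i = (1-\lambda_i(P))\,v_i$, the $\pi$-orthonormality gives $\Q_P(x) = \sum_{i=2}^{n}(1-\lambda_i(P))\,c_i^2$ and $\lVert x\lVert_{\pi}^2 = \sum_{i=2}^{n} c_i^2$. This is the only computation needed, and it is immediate from the expansion.

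From here the conclusion is purely order-theoretic. Since $0 \le 1-\lambda_2(P) \le 1-\lambda_i(P)$ for every $i \ge 2$ (the left inequality being the non-negativity of $\Q_P$ on $\Delta^{\bot_P}$ already noted above), the two sums yield $\Q_P(x) \ge (1-\lambda_2(P))\lVert x\lVert_{\pi}^2$ for all $x \in \Delta^{\bot_P}$, with equality attained at $x = v_2$. Hence $\inf_{x \in \Delta^{\bot_P}\setminus\{\zero\}} \Q_P(x)/\lVert x\lVert_{\pi}^2 = 1-\lambda_2(P)$, which is the displayed identity. The equivalence of the two assertions then drops out by rewriting assertion~2 as ``the infimum of the Rayleigh quotient is at least $\gamma$'': this holds iff $1-\lambda_2(P) \ge \gamma$, i.e.\ iff $\lambda_2(P) \le 1-\gamma$, which is assertion~1.

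I expect no real obstacle beyond one point of care: justifying that self-adjointness with respect to the weighted inner product $\langle\,\cdot\,,\cdot\,\rangle_{\pi}$ --- rather than the standard one --- still delivers a genuine orthonormal eigenbasis, so that the Courant--Fischer machinery transfers verbatim. Everything else is the standard restriction of the Rayleigh quotient to the $P$-invariant subspace $\Delta^{\bot_P}$, using facts already established in the excerpt, namely the invariance of $\Delta^{\bot_P}$, the reality and ordering of the spectrum, and the positive-definiteness of $\Q_P$ on $\Delta^{\bot_P}$.
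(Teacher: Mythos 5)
Your proof is correct and follows essentially the same route as the paper's: expand $x$ in a $\pi$-orthonormal eigenbasis of $P$ (valid since reversibility means $\pi$-self-adjointness) and diagonalize $\Q_P$ to get $\Q_P(x)=\sum_i (1-\lambda_i(P))\,z_i^2$, from which the Rayleigh-quotient identity and the equivalence follow. You merely spell out the final order-theoretic step (infimum attained at the second eigenvector) that the paper leaves implicit.
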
	

\begin{proof}
Let  $\{ \varepsilon_1, \dots, \varepsilon_n\}$ be an orthonormal basis for the inner product 
	$ \langle . , . \rangle_{\pi} $ such that  $\varepsilon_1 = \one$ and for each index $i\in [n]$,
	 $$P \varepsilon_i = \lambda_i (P) \, \varepsilon_i  . $$
	 
Let $z_1, \dots, z_n$ the  components of~$x$ in this basis, namely,
	$$ x = z_1  \varepsilon_1 +  \dots + z_n\varepsilon_n \, .$$
Hence, $$ \Q_{P} (x) =   \sum_{i \in [n]} \big(1-\lambda_i (P)\big) z_i^2   $$
	which shows the equivalence of the two assertions in the lemma.	
\end{proof}

Another corollary of Proposition~\ref{pro:greengen} is the following inequality between the two norms~$\lVert . \lVert_{\pi}$ 
	and~$\N$ on~$\Delta^{\bot_{\pi}} $, where $\pi$ is any positive probability vector.
	
\begin{cor}\label{cor:N>Euclid}
If $\pi$ is  a positive probability vector, then the
	Euclidean norm $ \lVert . \lVert_{\pi}$ is bounded above on $\Delta^{\bot_{\pi}}$ by the semi-norm $\N/\sqrt{2}$, i.e., 
	$$\forall x \in \Delta^{\bot_{\pi}}, \  \N(x) \geq \sqrt{2} \,  \lVert x  \lVert_{\pi} .  $$
\end{cor}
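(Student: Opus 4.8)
The plan is to feed Proposition~\ref{pro:greengen} the rank-one perturbation of the identity that restricts to the identity map on $\Delta^{\bot_{\pi}}$. Concretely, I would take $L = I - \one\,\pi^{\mathrm{T}}$, i.e.\ the matrix whose off-diagonal entries are $L_{ij} = -\pi_j$ and whose diagonal entries are $L_{ii} = 1 - \pi_i$. Three quick checks make this matrix both eligible for the proposition and useful. First, $L\,\one = \one - \one\,(\pi^{\mathrm{T}}\one) = \zero$ since $\pi$ is a probability vector, so $\one \in \ker(L)$. Second, $L$ is $\pi$-self-adjoint: using the entrywise formula $L^{\dag_{\pi}}_{ij} = \tfrac{\pi_j}{\pi_i}\,L_{ji}$ one finds $\tfrac{\pi_j}{\pi_i}\,L_{ji} = L_{ij}$ in both the diagonal and the off-diagonal cases. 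Third, for $x \in \Delta^{\bot_{\pi}}$ the $\pi$-weighted mean $\pi^{\mathrm{T}}x = \langle x, \one\rangle_{\pi}$ vanishes, hence $Lx = x$ and therefore $\langle x, Lx\rangle_{\pi} = \lVert x\lVert_{\pi}^2$.

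Combining these facts with Proposition~\ref{pro:greengen} yields the key identity. The diagonal terms of the Green sum carry the factor $(x_i - x_i)^2 = 0$, and the surviving off-diagonal entries are $L_{ij} = -\pi_j$, so the formula collapses to
$$\lVert x\lVert_{\pi}^2 = \langle x, Lx\rangle_{\pi} = \frac{1}{2}\sum_{i\in[n]}\sum_{j\in[n]}\pi_i\,\pi_j\,(x_i - x_j)^2 \,.$$
This exhibits the $\pi$-norm on $\Delta^{\bot_{\pi}}$ as a $\pi\otimes\pi$-weighted average of squared pairwise differences, and it is the heart of the argument.

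The final step is a one-line majorization. Every pairwise difference obeys $|x_i - x_j| \le \max_k (x_k) - \min_k (x_k) = \N(x)$, so $(x_i - x_j)^2 \le \N(x)^2$ uniformly, while the weights sum to $\sum_{i,j}\pi_i\,\pi_j = \big(\sum_i \pi_i\big)\big(\sum_j \pi_j\big) = 1$. Substituting both facts into the identity gives $\lVert x\lVert_{\pi}^2 \le \tfrac12\,\N(x)^2$, which rearranges to the claimed bound $\N(x) \ge \sqrt{2}\,\lVert x\lVert_{\pi}$.

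I expect the only genuine obstacle to be the first step, namely guessing the right $L$; once one recognizes that $I - \one\,\pi^{\mathrm{T}}$ is simultaneously $\pi$-self-adjoint, annihilates $\one$, and acts as the identity on $\Delta^{\bot_{\pi}}$, the proposition does all the work and the remainder is routine. Should the self-adjointness verification feel delicate, an alternative is to establish the identity $\lVert x\lVert_{\pi}^2 = \tfrac12\sum_{i,j}\pi_i\,\pi_j\,(x_i-x_j)^2$ directly by expanding its right-hand side and using $\sum_j \pi_j x_j = 0$ together with $\sum_j \pi_j = 1$, thereby bypassing Proposition~\ref{pro:greengen} altogether; but invoking Green's formula is cleaner and matches the stated ``corollary'' framing.
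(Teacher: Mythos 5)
Your proof is correct and follows essentially the same route as the paper: the paper also applies Proposition~\ref{pro:greengen} to $L = I - \one\,\pi^{\mathrm{T}}$ (written there as the orthogonal projector onto $\Delta$), obtains the identity $\lVert x\lVert_{\pi}^2 = \tfrac{1}{2}\sum_{i,j}\pi_i\,\pi_j\,(x_i - x_j)^2$, and concludes by bounding each pairwise difference by $\N(x)$. The only difference is presentational: you verify the $\pi$-self-adjointness and kernel conditions explicitly, which the paper leaves implicit.
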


\begin{proof}
Let us consider the orthogonal projector $\pi . \one^{\rm{T}}$  on $\Delta$, where $\pi$ is $P$'s Perron vector.
Thus,  for any vector in $x\in\Delta^{\bot_{P}}$, we have
	$$  \lVert x  \lVert_{\pi}^2  = \langle x, x-  \pi . \one^{\rm{T}} . \, x \rangle_{\pi} .$$
Since $\pi . \one^{\rm{T}} $ is stochastic and reversible, Proposition~\ref{pro:greengen} gives
	\begin{equation}\label{eq:variance}
	  \lVert x  \lVert_{\pi}^2  =  \frac{1}{2}  \sum_{i \in[n]} \sum_{ j \in [n]}   (x_i - x_j)^2 \pi_i \, \pi_j 
	  \end{equation}
	and the inequality $ \N(x) \geq \sqrt{2} \, \lVert x  \lVert_{\pi} $ immediately follows. 
\end{proof}

\section{The spectral gap of  a reversible  stochastic matrix}\label{sec:eigen}

\subsection{An analytic bound}

We start by introducing the following  notation: given a stochastic matrix $P$ and its Perron  vector~$\pi$, we set 
	$$ \mu (P) =  \min_{\emptyset \subsetneq S \subsetneq [n]}  \left (  \sum_{i \in S} \sum_{j\notin S} \pi_i \, P_{i \, j} \! \right )  .  $$
		
\begin{lem}[Lemma 8 in~\cite{NOOT09}]\label{lem:in}
If $P$ is a  reversible stochastic matrix, then for every  vector $x \in \IR^n$, 
	$$ \Q_{P}(x) \geq \frac{\mu (P)}{n - 1} \,  \big( \N(x) \big)^2  .$$
\end{lem}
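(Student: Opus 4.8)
The plan is to convert $\Q_{P}$ into an explicit weighted sum of squares over edges via the Green-type identity, sort the coordinates of $x$, and then read the bound off the cut-weights that define $\mu(P)$. Since $P$ is reversible, the matrix $L = I - P$ is $\pi$-self-adjoint with $\one \in \ker(L)$, so Proposition~\ref{pro:greengen} applied to $L$ (noting $L_{ij} = -P_{ij}$ for $i\neq j$, while the diagonal terms carry the vanishing factor $(x_i-x_i)^2$) gives
$$\Q_{P}(x) = \langle x, (I-P)x\rangle_{\pi} = \frac{1}{2}\sum_{i\in[n]}\sum_{j\in[n]} \pi_i\,P_{ij}\,(x_i-x_j)^2 = \sum_{i<j} w_{ij}\,(x_i-x_j)^2,$$
where I abbreviate $w_{ij} = \pi_i\,P_{ij}$; these weights are symmetric precisely because reversibility gives $\pi_i P_{ij} = \pi_j P_{ji}$. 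With this notation, $\mu(P) = \min_{\emptyset\subsetneq S\subsetneq[n]} \sum_{i\in S,\,j\notin S} w_{ij}$ is exactly the minimal cut weight.

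Next I would use a monotone relabeling. Since $\Q_{P}(x)$, $\N(x)$ and $\mu(P)$ are all invariant under simultaneously permuting the rows and columns of $P$, the entries of $\pi$, and the coordinates of $x$, I may assume $x_1 \le x_2 \le \cdots \le x_n$. Then $\N(x) = x_n - x_1 = \sum_{k=1}^{n-1} d_k$ with $d_k := x_{k+1}-x_k \ge 0$. The crucial observation is that each consecutive threshold gives a proper nonempty subset $S_k = \{1,\dots,k\}$, so its cut weight $C_k := \sum_{i\le k<j} w_{ij}$ satisfies $C_k \ge \mu(P)$ for every $k\in\{1,\dots,n-1\}$.

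The heart of the argument is to route the gaps $d_k$ through these cuts. For $i<j$ I would bound $(x_i-x_j)^2 = \big(\sum_{k=i}^{j-1} d_k\big)^2 \ge \sum_{k=i}^{j-1} d_k^2$, valid because all $d_k\ge 0$ makes the cross terms nonnegative. Substituting and exchanging the order of summation — the term $d_k^2$ is collected exactly by the pairs with $i\le k<j$ — yields
$$\Q_{P}(x) \ge \sum_{k=1}^{n-1} d_k^2 \sum_{i\le k<j} w_{ij} = \sum_{k=1}^{n-1} C_k\, d_k^2 \ge \mu(P)\sum_{k=1}^{n-1} d_k^2.$$
A single application of Cauchy--Schwarz to the $n-1$ gaps then gives $\sum_{k=1}^{n-1} d_k^2 \ge \frac{1}{n-1}\big(\sum_{k=1}^{n-1} d_k\big)^2 = \frac{1}{n-1}\,\N(x)^2$, and combining the two estimates closes the proof.

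I expect the main obstacle to be identifying the right discretization rather than any hard computation. The tempting first move is a co-area argument giving the $\ell^1$ bound $\sum_{i<j} w_{ij}|x_i-x_j| \ge \mu(P)\,\N(x)$; this is immediate, but it lives in the wrong norm, and squaring it through Cauchy--Schwarz produces a factor $\mu(P)^2$ instead of the desired $\mu(P)$. The correct route is to distribute each \emph{squared} gap $d_k^2$ across its own cut \emph{before} spending Cauchy--Schwarz, and the factor $1/(n-1)$ is exactly the cost of that final step. Everything else — the Green identity, the symmetry of $w$, and the invariance under relabeling — is routine.
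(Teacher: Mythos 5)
Your proof is correct and follows essentially the same route as the paper's: apply the Green-type identity of Proposition~\ref{pro:greengen}, sort the coordinates, expand each gap $(x_i-x_j)^2$ via the inequality $\big(\sum_k d_k\big)^2 \geq \sum_k d_k^2$, reorder the summation so that each consecutive cut $S_k=\{1,\dots,k\}$ collects the term $d_k^2$ with weight at least $\mu(P)$, and finish with Cauchy--Schwarz. The only difference is presentational: you make explicit the symmetry $\pi_i P_{ij}=\pi_j P_{ji}$ and the identification of the thresholds with the cuts defining $\mu(P)$, which the paper leaves implicit.
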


\begin{proof}
Using index permutation, we assume that  $x_1\leq \dots \leq  x_n$.
Since for any nonnegative numbers $v_1, \dots,v_k$, we have
	$$ (v_1 + \dots + v_k)^2 \geq v_1^2 + \dots + v_k^2 \, ,$$
	it follows that  
	$$ \sum_{i < j} \pi_i P_{i j} (x_i - x_j)^2  \geq   
	\sum_{i < j} \left( \pi_i P_{i j} \sum_{d=i}^{j-1} (x_{d+1} - x_d)^2 \right ) .
	$$
By reordering the terms in the last sum, we obtain
	$$ \sum_{i < j} \pi_i P_{i j} (x_i - x_j)^2  \geq   
	      \sum_{d=1}^{n-1} \, \sum_{i =1}^d 
	      \sum_{j= d+1}^n  \pi_i P_{i j} (x_{d+1} - x_d)^2 \,.$$
Then  Proposition~\ref{pro:greengen}  shows that 
	$$\Q_{P}(x) \geq  \mu (P) \,  \sum_{d=1}^{n-1} (x_{d+1} - x_d)^2 \, .$$
By Cauchy-Schwarz, we have
	$$ \sum_{d=1}^{n-1} (x_{d+1} - x_d)^2  \geq \frac{1}{n - 1 } \left(x_n - x_1 \right)^2 \, ,$$
	which completes the proof.
\end{proof}

That leads us to introduce
	\begin{equation}\label{eq:eta}
	\eta(P) = \frac{ n - 1 }{2 \, \mu  (P) }.
	\end{equation}
Combining Corollary~\ref{cor:N>Euclid} with Lemmas~\ref{lem:gamma} and~\ref{lem:in}, we obtain the following 
	lower bound on the spectral gap of a reversible stochastic matrix.

\begin{prop}\label{prop:specgapnoot}
If $P$ is a reversible  stochastic matrix, then
	$$  \lambda_2(P) \leq  1 - \frac{1 }{ \eta(P) } $$
	with  $\eta (P)$ defined by~(\ref{eq:eta}).
\end{prop}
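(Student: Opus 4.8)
The plan is to chain together the three ingredients assembled just above this statement, with no new computation required. First I would invoke Lemma~\ref{lem:in}, which for any vector $x\in\IR^n$ provides the lower bound $\Q_{P}(x)\geq \frac{\mu(P)}{n-1}\,\big(\N(x)\big)^2$ on the quadratic form in terms of the semi-norm $\N$. Next, restricting attention to vectors $x\in\Delta^{\bot_{P}}$, I would apply Corollary~\ref{cor:N>Euclid} to trade the semi-norm for the Euclidean norm: since $\big(\N(x)\big)^2\geq 2\,\lVert x\lVert_{\pi}^2$ on $\Delta^{\bot_{\pi}}$, substituting gives $\Q_{P}(x)\geq \frac{2\,\mu(P)}{n-1}\,\lVert x\lVert_{\pi}^2$ for every $x\in\Delta^{\bot_{P}}$.

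Recognizing that the constant $\frac{2\,\mu(P)}{n-1}$ is by definition~(\ref{eq:eta}) exactly $1/\eta(P)$, I would rewrite the inequality as $\Q_{P}(x)\geq \frac{1}{\eta(P)}\,\lVert x\lVert_{\pi}^2$ for all $x\in\Delta^{\bot_{P}}$. This is precisely assertion~(2) of Lemma~\ref{lem:gamma} with the choice $\gamma=1/\eta(P)$. Invoking that equivalence, the matching assertion~(1) then reads $\lambda_2(P)\leq 1-1/\eta(P)$, which is the claimed bound.

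Since every step is a direct application of a result established earlier in the excerpt, I expect no genuine obstacle; the proof is essentially a bookkeeping exercise in composing the inequalities in the correct order. The two points that merit care are the following. First, the \emph{domain} of validity must be tracked: Lemma~\ref{lem:in} holds on all of $\IR^n$, whereas both Corollary~\ref{cor:N>Euclid} and the variational characterization of Lemma~\ref{lem:gamma} are statements about $\Delta^{\bot_{P}}$, so the entire chain has to be carried out for $x\in\Delta^{\bot_{P}}$. Second, Lemma~\ref{lem:gamma} requires $\gamma$ to be a \emph{positive} real number; this is guaranteed here because $\mu(P)>0$ for an irreducible stochastic matrix $P$ (every strict cut $S$ has some edge leaving it), so that $\gamma=1/\eta(P)=\frac{2\,\mu(P)}{n-1}>0$ and the equivalence legitimately applies.
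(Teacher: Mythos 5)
Your proof is correct and follows exactly the route the paper intends: the paper's own justification is the one-line remark ``Combining Corollary~\ref{cor:N>Euclid} with Lemmas~\ref{lem:gamma} and~\ref{lem:in}, we obtain\dots'', which is precisely your chain of inequalities, and your two points of care (working on $\Delta^{\bot_{P}}$ and the positivity of $\mu(P)$ from irreducibility, which the paper assumes throughout) are exactly the details left implicit there.
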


The quantity $\mu(P)$ is related to the \emph{Cheeger constant} 
	$$ h(P) = \min_{\pi(S) \leq 1/2} \frac{ \sum_{i \in S} \sum_{j\notin S} \pi_i \, P_{i \, j} }{\pi(S)} $$
	and satisfies $\mu(P)\leq h(P)/2$.
Cheeger's inequalities
	\begin{equation}\label{eq:cheeger}
	1 - 2 h(P) \leq \lambda_2(P) \leq 1 - \frac{h(P)^2}{2}
	\end{equation}
	give an estimate of the second eigenvalue of~$P$.
The bound $1-1/\eta(P)$ in Proposition~\ref{prop:specgapnoot} 
	 is incomparable with $1-h(P)^2/2$,
	but turns out to be worse in most cases\footnote{%
	If  $\mu(P) \geq 1/(n-1)$, then $1- h(P)^2/2 \leq 1- 1/\eta(P)$. 
	This inequality also holds in all the examples in Section~\ref{sec:applis}.}.
Moreover, computing $\mu(P)$, or equivalently  $\eta(P)$, is as difficult as computing $h(P)$ in general  
	-- so why presenting the bound $1-1/\eta(P)$?
In fact, our  primary motivation here is developed in Section~\ref{sec:singular}:
	 the latter bound gives a  simple  estimate on the singular
	values of even non-reversible stochastic matrices.
	
\subsection{A geometric bound}\label{sec:bcha}

Following~\cite{DS91}, we define the  $P$-\emph{length} of a path $\gamma  = u_1, \dots, u_{\ell +1}$ 
	in the graph~$G_{\! _P}$ by
	$$ \len P = \sum_{k\in[\ell]} \big( \pi_{u_k}  P_{u_k \, u_{k+1}} \big)^{-1} .$$

For our  geometric bound, we consider a family of paths in the graph~$G_{\! _P}$ defined as follows: 
	 for each pair of nodes~$i,j$, let~$\Gamma_{i , j}$ be a non empty set of 
	edge-disjoint paths from~$i$ to~$j$.
Since $P$ is irreducible, such a set exists. 
Moreover, Menger's theorem shows that~$\Gamma_{i , j}$ may be chosen with cardinality equal to any integer in $[\kappa]$, 
	where $\kappa$ is the edge-connectivity 
	of~$G_{\! _P}$\footnote{%
	The \emph{edge-connectivity} of a directed graph~$G$ is defined to be the minimum number 
	of edges in $G$ whose removal results in a directed graph that is not strongly connected.}.
As will become clear, the quality of our estimate depends on making a judicious choice for the path sets $\Gamma_{i , j}$.

The geometric quantity that appears in our bound is 
	\begin{equation}\label{eq:kbcb}
	\kbcb (P) = \max_{i \neq j }  \left( \sum_{\gamma \in \Gamma_{i , j} } \len P ^{\  -1}\right)^{-1} .
	\end{equation}

\begin{prop}\label{prop:specgapcha}
If $P$ is a reversible  stochastic matrix, then
	$$  \lambda_2(P) \leq 1 - \frac{1}{\kbcb (P) }$$ 
	where $\kbcb (P)$ is defined by (\ref{eq:kbcb}).
\end{prop}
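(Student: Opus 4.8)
The plan is to reduce the statement, via the variational characterization of Lemma~\ref{lem:gamma}, to the Poincar\'e-type inequality $\Q_P(x)\geq \kbcb(P)^{-1}\,\lVert x\lVert_{\pi}^2$ for every $x\in\Delta^{\bot_P}$, and then to follow the same two-stage strategy used for Proposition~\ref{prop:specgapnoot}: first bound $\Q_P(x)$ below by a multiple of $\N(x)^2$, and then pass from $\N$ to $\lVert\cdot\lVert_{\pi}$ through Corollary~\ref{cor:N>Euclid}. Concretely, the intermediate estimate I aim for is $\Q_P(x)\geq \frac{1}{2\kbcb(P)}\,\N(x)^2$. Since Corollary~\ref{cor:N>Euclid} gives $\N(x)^2\geq 2\,\lVert x\lVert_{\pi}^2$, the two factors of $2$ cancel, yielding $\Q_P(x)\geq \kbcb(P)^{-1}\lVert x\lVert_{\pi}^2$, which is exactly the second assertion of Lemma~\ref{lem:gamma} with $\gamma=\kbcb(P)^{-1}$; this delivers the claimed bound.

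To produce the intermediate estimate I would first rewrite the Dirichlet form through the Green formula. As $P$ is reversible, $I-P$ is $\pi$-self-adjoint with $\one\in\ker(I-P)$, so Proposition~\ref{pro:greengen} gives $\Q_P(x)=\frac{1}{2}\sum_{i,j}\pi_i P_{ij}(x_i-x_j)^2$. Abbreviating $Q(e)=\pi_u P_{uv}(x_u-x_v)^2$ for a directed edge $e=(u,v)$ of $G_{\!_P}$, this reads $\sum_e Q(e)=2\,\Q_P(x)$, the sum over all directed edges. The heart of the argument is a single per-path inequality: for a path $\gamma=u_1,\dots,u_{\ell+1}$ from $i$ to $j$, I telescope $x_i-x_j=\sum_k (x_{u_k}-x_{u_{k+1}})$ and apply Cauchy--Schwarz with the weights $(\pi_{u_k}P_{u_k u_{k+1}})^{\pm 1/2}$, which are positive since the Perron vector and the transition probabilities along a path of $G_{\!_P}$ are positive. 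This gives $(x_i-x_j)^2\leq \len P\sum_{e\in\gamma}Q(e)$, where the first factor is precisely the $P$-length $\len P$.

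Next I would specialize to the pair $(a,b)$ of nodes realizing $\N(x)=x_a-x_b$, and sum the per-path inequality, written as $(x_a-x_b)^2\,\len P^{-1}\leq\sum_{e\in\gamma}Q(e)$, over the edge-disjoint family $\Gamma_{a,b}$. Edge-disjointness is the decisive point: since no directed edge is shared by two paths of $\Gamma_{a,b}$ and every $Q(e)$ is nonnegative, the right-hand side sums to at most $\sum_e Q(e)=2\,\Q_P(x)$. Hence $\N(x)^2\sum_{\gamma\in\Gamma_{a,b}}\len P^{-1}\leq 2\,\Q_P(x)$, and bounding $\sum_{\gamma\in\Gamma_{a,b}}\len P^{-1}\geq \min_{i\neq j}\sum_{\gamma\in\Gamma_{i,j}}\len P^{-1}=\kbcb(P)^{-1}$ from definition~(\ref{eq:kbcb}) yields the intermediate estimate $\Q_P(x)\geq \frac{1}{2\kbcb(P)}\N(x)^2$.

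I expect the main obstacle to be bookkeeping rather than depth: the directed-versus-undirected edge count must be handled carefully, so that the factor $2$ in $\sum_e Q(e)=2\,\Q_P(x)$ is tracked and then cancelled against the factor in Corollary~\ref{cor:N>Euclid}. The Cauchy--Schwarz step that manufactures $\len P$ is the only genuinely clever move, and selecting the single extremal pair $(a,b)$ realizing $\N(x)$ -- rather than averaging a bottleneck over all ordered pairs, as in the classical Diaconis--Stroock bound -- is exactly what keeps the constant equal to $\kbcb(P)$ and makes the proof parallel the role of $\N$ in Lemma~\ref{lem:in}.
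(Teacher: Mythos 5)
Your proof is correct, and its core is exactly the paper's argument: reduction via Lemma~\ref{lem:gamma}, Green's formula (Proposition~\ref{pro:greengen}), the weighted Cauchy--Schwarz along each path that manufactures $\len P$, and edge-disjointness of the family $\Gamma_{i,j}$ so that the per-path sums embed into $2\,\Q_{P}(x)$ without double counting. The only divergence is the finish: the paper applies the per-pair bound $\Q_{P}(x)\geq (x_i-x_j)^2/\bigl(2\,\kbcb(P)\bigr)$ to every ordered pair, multiplies by $\pi_i\,\pi_j$, sums, and recognizes $\lVert x \lVert_{\pi}^2$ through the variance identity~(\ref{eq:variance}), whereas you apply it only to the extremal pair realizing $\N(x)$ and then pass from $\N$ to $\lVert . \lVert_{\pi}$ with Corollary~\ref{cor:N>Euclid}. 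The two finishes are equivalent in content, since Corollary~\ref{cor:N>Euclid} is itself a one-line consequence of~(\ref{eq:variance}); neither loses a constant, and both land on $\Q_{P}(x)\geq \kbcb(P)^{-1}\lVert x \lVert_{\pi}^2$, hence the same conclusion via Lemma~\ref{lem:gamma}. What your variant buys is purely structural: it makes the geometric bound's proof run parallel to the analytic one (Proposition~\ref{prop:specgapnoot} via Lemma~\ref{lem:in} and Corollary~\ref{cor:N>Euclid}), as you intended, at the cost of invoking one extra corollary where the paper argues directly.
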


\begin{proof}
Let $i$ and $j$ be any pair of distinct nodes.
Proposition~\ref{pro:greengen}  shows that 
	$$ \Q_{P}(x) \geq  \frac{1}{2} \sum_{\gamma \in \Gamma_{i , j} }  \sum_{(u , v) \in \gamma} \pi_u \, P_{u v } (x_u - x_v)^2  , $$
	where $\pi$ denotes the Perron vector of $P$.
By convexity of the square function,  we have
	$$  \left(\sum_{(u , v) \in \gamma}  (x_u - x_v)\right )^2 \leq  \sum_{(u , v) \in \gamma} \pi_u \, P_{u v } (x_u - x_v)^2 
			                                                              \sum_{(u , v) \in \gamma} \frac{1}{\pi_u \, P_{u v }} , $$
which implies 
	$$ \Q_{P}(x) \geq \left (\sum_{\gamma \in \Gamma_{i , j} } \frac{ 1}{ \len P} \right)   \frac{ (x_i - x_j)^2 }{2} 
	                  \geq  \frac{ (x_i - x_j)^2 }{2 \, \kbcb (P) } . $$

Hence 
	$$ \Q_{P}(x)  = \sum_{i \in[n]} \sum_{ j \in [n]}  \Q_{P}(x) \pi_i \pi_j\geq 
		\frac{1}{\kbcb (P)}  \left ( \frac{1}{2}   \sum_{i \in[n]} \sum_{ j \in [n]}   (x_i - x_j)^2 \pi_i \, \pi_j   \right) \!
		= \frac{1}{\kbcb (P)} \,  \lVert x \lVert_{\pi}^2 ,$$
		and the result follows.
The first equality holds because the sum of  $\pi$'s entries is 1 and the second one is the formula~(\ref{eq:variance}).
\end{proof}


Let us now recall some notions from graph theory~(see, e.g.,~\cite{HL94}).
First, define the \emph{depth} of a set of paths in a directed graph~$G$ as the maximum length of all its paths.
For every positive integer~$k$ and every pair of nodes~$(i,j)$, the $k$-\emph{distance from $i$ to $j$}, denoted~$d_k(i,j)$, 
	is  the minimum depth of the sets of pairwise disjoint-edge paths from $i$ to~$j$ of cardinality~$k$,
	if there is any;
	otherwise, the $k$-distance from $i$ to $j$ is infinite.
Then the $k$-\emph{diameter of~$G$}, denoted~$\diam_k(G)$, is the maximum $k$-distance between any pair of nodes.
The 1-diameter of~$G$  thus coincides with its diameter.

The parameter that naturally emerges when one looks for estimates of $\kbcb(P)$
	is the \emph{normalized diameter} of~$G$, denoted~$\diamnorm(G)$, defined by
	\begin{equation}\label{eq:diamnorm}
	 \diamnorm(G) = \min_{k\geq 1} \, \frac{ \diam_k(G)}{ k } .
	 \end{equation}
It clearly satisfies $  \diamnorm(G) \leq \diam(G)$.
Moreover, Menger's theorem shows that $\diam_k (G)$ is finite  if and only if~$k  $ is 
	less or equal to the edge-connectivity of~$G$, denoted~$\ka(G)$, thus providing the upper bound
	$  \diamnorm(G) \leq (n-1)/ \ka(G)$.
	
Let~$k$ be any integer such that $1 \leq k \leq 	\ka(G)$.
For every set $\Gamma_{i , j}$ of $k$ edge-disjoint paths from~$i$ to~$j$, we have
	$$ \sum_{\gamma \in \Gamma_{i , j} } \frac{1}{\len {} }  \geq  
				\frac{k}{d_k(i,j)} \geq  \frac{k}{  \diam_k( G_{\! _P} ) } . $$
It follows that if $k$ realizes the minimum in~(\ref{eq:diamnorm}), then 
	$$ \sum_{\gamma \in \Gamma_{i , j} } \frac{1}{\len {}}  \geq   \frac{1}{\diamnorm  (G_{\! _P} )} .$$
By setting
	\begin{equation}\label{eq:alpha}
	\alpha  (P) = \min_{(i,j)\in E(G_{_P })}    \pi_i P_{i j}   ,
	\end{equation}
		we have 
	$ \len P \leq \len {}   / \alpha(P) $, and hence
	$$ \kbcb(P) \leq    \frac{\diamnorm  G_{\! _P} )}{\alpha(P)}  .$$
Thus, we obtain the following corollary to Proposition~\ref{prop:specgapcha}.
	
\begin{cor}\label{cor:Kbcb}
The eigenvalues of a reversible stochastic matrix smaller than 1 are bounded above~by
	$$  \bcb (P) = 1 - \frac{\alpha(P)}{\diamnorm (G_{\! _P}) } ,$$ 
	where $\alpha(P)$ is defined by (\ref{eq:alpha}) and $\diamnorm (G_{\! _P}) $ is the normalized diameter 
	of the graph associated to $P$.
\end{cor}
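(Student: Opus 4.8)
The plan is to obtain the corollary by feeding the inequality $\kbcb(P) \le \diamnorm(G_{\! _P})/\alpha(P)$ — already prepared in the discussion preceding the statement — into the spectral bound of Proposition~\ref{prop:specgapcha}. Once that inequality is secured, everything else is substitution; so the one genuine task is to justify it cleanly.

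First I would reduce the claim about \emph{all} eigenvalues below $1$ to a claim about $\lambda_2(P)$ alone. By the ordering $-1 \le \lambda_n(P) \le \dots \le \lambda_2(P) \le \lambda_1(P) = 1$ recorded earlier for reversible stochastic matrices, together with the fact that $1$ is a simple eigenvalue, every eigenvalue strictly smaller than $1$ is at most $\lambda_2(P)$. Hence it suffices to prove $\lambda_2(P) \le \bcb(P)$.

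Next I would establish $\kbcb(P) \le \diamnorm(G_{\! _P})/\alpha(P)$, which is the heart of the argument. Let $k$ be an integer realizing the minimum in the definition~(\ref{eq:diamnorm}) of $\diamnorm(G_{\! _P})$. Because $\diam_k(G_{\! _P}) = \infty$ whenever $k$ exceeds the edge-connectivity $\ka(G_{\! _P})$, such a minimizing $k$ automatically satisfies $k \le \ka(G_{\! _P})$, so Menger's theorem lets me pick, for each ordered pair $i \neq j$, a set $\Gamma_{i,j}$ of exactly $k$ edge-disjoint paths from $i$ to $j$ whose depth is at most $\diam_k(G_{\! _P})$. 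On each edge $(u,v)$ the bound $\pi_u P_{uv} \ge \alpha(P)$ gives $\len{P} \le \len{}/\alpha(P)$, so that each path contributes $\len{P}^{-1} \ge \alpha(P)\,\len{}^{-1}$; since every chosen path has (unweighted) length at most $\diam_k(G_{\! _P})$, summing the reciprocals over the $k$ paths yields
$$ \sum_{\gamma \in \Gamma_{i,j}} \len{P}^{-1} \ge \alpha(P) \sum_{\gamma \in \Gamma_{i,j}} \len{}^{-1} \ge \alpha(P)\,\frac{k}{\diam_k(G_{\! _P})} = \frac{\alpha(P)}{\diamnorm(G_{\! _P})} . $$
Taking the maximum over pairs in~(\ref{eq:kbcb}) then gives $\kbcb(P) \le \diamnorm(G_{\! _P})/\alpha(P)$. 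Combining with Proposition~\ref{prop:specgapcha}, $\lambda_2(P) \le 1 - 1/\kbcb(P) \le 1 - \alpha(P)/\diamnorm(G_{\! _P}) = \bcb(P)$, which is the claim.

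The step demanding the most care is the choice of the path family $\Gamma_{i,j}$: it must simultaneously have cardinality exactly $k$ — forcing $k \le \ka(G_{\! _P})$, which is precisely why the \emph{minimizing} $k$ is the correct one — and have depth small enough that each path's length is controlled by $\diam_k(G_{\! _P})$. This double requirement is exactly what the $k$-diameter together with Menger's theorem is designed to supply, and once it is met the remaining manipulations are routine.
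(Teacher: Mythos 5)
Your proposal is correct and follows essentially the same route as the paper: the paper's own argument (in the discussion preceding the corollary) likewise chooses path families of cardinality $k$ realizing the minimum in the definition of $\diamnorm(G_{\!_P})$, bounds $\sum_{\gamma\in\Gamma_{i,j}} \len{P}^{-1} \geq \alpha(P)/\diamnorm(G_{\!_P})$ via $\len{P} \leq \len{}/\alpha(P)$, and feeds the resulting bound $\kbcb(P) \leq \diamnorm(G_{\!_P})/\alpha(P)$ into Proposition~\ref{prop:specgapcha}. Your explicit justification that the minimizing $k$ satisfies $k \leq \ka(G_{\!_P})$, and that the chosen sets can be taken with depth at most $\diam_k(G_{\!_P})$, tightens points the paper leaves implicit.
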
	
\subsection{Diaconis and Stroock's geometric bound}

We now present another geometric bound on the spectral gap of a reversible stochastic matrix,
	which  has been 
	developed by  Diaconis and Stroock~\cite{DS91}. 
It depends on the choice of a set of paths in the directed graph $G_{\! _P} $,
	one for each ordered pair of distinct nodes:
	for every pair $i,j$ of nodes, let~$\gamma_{i\, j}$ be  a path from~$i$ to~$j$,
	and let~$\Gamma$ be the set of all these paths.

The geometric quantity that appears in their bound is
	\begin{equation}\label{eq:kds}
	\kDS (P) = \max_{ e }  \sum_{ e \in \gamma_{i \, j} }  \lenij P  \pi_i \,  \pi_j  ,
	\end{equation}
 	where the maximum is over  edges in the directed graph~$G_{\! _P}$ and the sum is over all paths in~$\Gamma$
	that traverse~$e$.

 Diaconis and Stroock~\cite{DS91}  developed a discrete analog of the Poincar\'e's inequality for
	estimating the spectral gap of the Laplacian on a domain:

\begin{prop}[Proposition 1 in~\cite{DS91}]\label{prop:specgapDS}
If $P$ is a reversible stochastic matrix, then 
	$$ \lambda_2(P) \leq   1 - \frac{1}{\kDS (P)}$$ 
	where $\kDS (P) $ is defined by (\ref{eq:kds}).
\end{prop}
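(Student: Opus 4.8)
The plan is to establish the Poincar\'e-type inequality $\Q_{P}(x)\geq \frac{1}{\kDS(P)}\,\lVert x\lVert_{\pi}^2$ for every $x\in\Delta^{\bot_P}$ and then to invoke Lemma~\ref{lem:gamma} with $\gamma=1/\kDS(P)$. The argument runs parallel to the proof of Proposition~\ref{prop:specgapcha}, but the bookkeeping is organized around \emph{edges} rather than around a fixed pair of endpoints, which is what forces the congestion flavour of~(\ref{eq:kds}). First I would rewrite both quadratic forms as sums of squared differences. Since $P$ is reversible, the matrix $L=I-P$ is $\pi$-self-adjoint and annihilates $\one$, so Proposition~\ref{pro:greengen} gives $\Q_{P}(x)=\frac12\sum_{i,j}\pi_i P_{ij}(x_i-x_j)^2$, a sum effectively ranging over the directed edges of $G_{\!_P}$; the variance identity~(\ref{eq:variance}) gives $\lVert x\lVert_{\pi}^2=\frac12\sum_{i,j}\pi_i\pi_j(x_i-x_j)^2$.

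Next I would process the right-hand side of the variance identity pair by pair. For each ordered pair $(i,j)$, telescoping along the chosen path $\gamma_{i\,j}=u_1,\dots,u_{\ell+1}$ yields $x_i-x_j=\sum_{(u,v)\in\gamma_{i\,j}}(x_u-x_v)$, and Cauchy--Schwarz, inserting the weights $\pi_u P_{uv}$ exactly as in Proposition~\ref{prop:specgapcha}, produces
$$(x_i-x_j)^2 \leq \lenij{P}\sum_{(u,v)\in\gamma_{i\,j}} \pi_u \, P_{uv}\,(x_u-x_v)^2,$$
where the first factor is precisely the $P$-length of $\gamma_{i\,j}$.

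Finally I would substitute this estimate into the variance identity and swap the order of summation, collecting for each directed edge $e=(u,v)$ all pairs $(i,j)$ whose path traverses $e$:
$$\lVert x\lVert_{\pi}^2 \leq \frac12\sum_{e=(u,v)} \pi_u \, P_{uv}\,(x_u-x_v)^2 \Bigl(\sum_{e\in\gamma_{i\,j}} \lenij{P}\,\pi_i\,\pi_j\Bigr).$$
The inner parenthesis is bounded above, uniformly in $e$, by the congestion quantity $\kDS(P)$ of~(\ref{eq:kds}), so it may be pulled out of the edge sum; what remains is exactly $2\,\Q_{P}(x)$ by the Green expression above. This gives $\lVert x\lVert_{\pi}^2\leq \kDS(P)\,\Q_{P}(x)$, which is the sought inequality, and the proposition follows from Lemma~\ref{lem:gamma}.

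I expect the only delicate point to be this last reindexing step: one must verify that rewriting the double sum over pairs $(i,j)$ and over the edges of $\gamma_{i\,j}$ as a double sum over edges $e$ and over pairs routing through $e$ is legitimate (it is, since each summand carries a well-defined edge), and that the resulting per-edge coefficient coincides \emph{verbatim} with the summand defining $\kDS(P)$, so that the uniform bound by $\max_e$ applies. Everything else --- Green's formula, the variance identity, telescoping and Cauchy--Schwarz --- is inherited directly from the machinery already assembled for Proposition~\ref{prop:specgapcha}.
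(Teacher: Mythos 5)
Your proof is correct. Note that the paper does not prove this proposition at all: it is imported by citation as Proposition 1 of \cite{DS91}, so there is no in-paper proof to compare against; your argument is precisely the standard Diaconis--Stroock canonical-path argument. It also correctly reuses the machinery the paper assembles for its own bound in Proposition~\ref{prop:specgapcha}: Green's formula (Proposition~\ref{pro:greengen}) applied to $L=I-P$ to get $\Q_{P}(x)=\frac12\sum_{i,j}\pi_i P_{ij}(x_i-x_j)^2$, the variance identity~(\ref{eq:variance}), telescoping along $\gamma_{i\,j}$ followed by Cauchy--Schwarz with the weights $\pi_u P_{uv}$, and the exchange of summation that converts the per-pair estimate into a per-edge bound dominated uniformly by $\kDS(P)$; Lemma~\ref{lem:gamma} then turns the resulting Poincar\'e inequality $\lVert x\rVert_{\pi}^2\leq \kDS(P)\,\Q_{P}(x)$ on $\Delta^{\bot_{P}}$ into the spectral bound. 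The structural contrast with Proposition~\ref{prop:specgapcha} that you identify is exactly the right one: there the Cauchy--Schwarz estimate is exploited per pair $(i,j)$ (with several edge-disjoint paths), whereas here a single path per pair is used and the bookkeeping is organized per edge, which is what makes the congestion quantity~(\ref{eq:kds}) appear. The reindexing you single out as the delicate step is legitimate for the reason you give (each summand is attached to a well-defined edge, and all summands are nonnegative, so dropping edges traversed by no path only helps), and your final chain of inequalities is consistent.
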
	

As for our bound which depends on the choice of the path sets $\Gamma_{i j}$, the quality of their estimate 
	depends on the choice for the paths $\gamma_{i , j}$: the lower bound~$\kDS (P) $ is all the better 
	if selected paths do not traverse any one edge too often. 
Following~\cite{DS91}, every path~$\gamma_{i , j}$ is chosen to be a geodesic.
The geometric quantity that  arises  here is a measure of bottlenecks in~$G_{\! _P}$ defined as
\begin{equation}\label{eq:bdef}
	b(G_{\! _P}) = \min_{\Gamma} \, \max_{e } \left | \{ \gamma \in \Gamma : e \in \gamma \} \right |  ,
\end{equation}
	where  the minimum is over the sets of paths~$\Gamma$ containing  only geodesics, and 
	 the maximum is over all the edges of $G_{\! _P}$.
It can be shown that 
	\begin{equation}\label{eq:bottleneck}
	 \frac{n-1}{\ka(G_{\! _P})} \leq b(G_{\! _P}) \leq n^2 ,
	 \end{equation}
	where $\ka(G_{\! _P}) $ is the edge-connectivity of $G_P$.
(The second inequality is straightforward;
	the first one may be proved by considering the partitioning of $G_P$ into two 
	strongly connected components when removing a certain set of ${\ka(G_{\! _P})}$ edges.)
Hence, $ \diamnorm(G_{\! _P}) \leq b(G_{\! _P})$.

Like the first geometric bound~$1-1/\kbcb(P)$,   the bound $1-1/\kDS(P)$ can be usefully approximated as follows.	
\begin{cor}\label{cor:KDS}
The eigenvalues of a reversible stochastic matrix~$P$ other than 1 are upper-bounded~by
	$$  \DS (P)  =  1 - \frac{ \alpha (P) }{ ( \pi_{\max})^2  \, \diam(G_{\! _P})  \, b (G_{\! _P}) } $$ 
	where $\alpha(P)$ is defined by (\ref{eq:alpha}), 
	 $\pi_{\max}$ is the largest entry of the Perron vector of~$P$,
	 $\diam(G_{\! _P}) $ and $b(G_{\! _P})$ are the diameter and the bottleneck measure 
	of the graph associated to $P$, respectively.
\end{cor}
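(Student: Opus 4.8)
The plan is to instantiate the path set $\Gamma$ in the definition~(\ref{eq:kds}) of $\kDS(P)$ with a convenient choice and then bound $\kDS(P)$ from above by a product of the three geometric quantities appearing in $\DS(P)$; Proposition~\ref{prop:specgapDS} then delivers the result. Concretely, I would take $\Gamma$ to be a set of geodesics that attains the bottleneck measure~$b(G_{\! _P})$ of~(\ref{eq:bdef}), which exists by definition. This single choice does double duty: every path $\gamma_{i\, j}$ is a geodesic, hence consists of at most $\diam(G_{\! _P})$ edges, and simultaneously no edge is traversed by more than $b(G_{\! _P})$ of the paths.

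With $\Gamma$ fixed, I would bound each factor in the sum defining $\kDS(P)$ separately. For the $P$-length, every edge $(u,v)$ satisfies $\pi_u P_{u v} \geq \alpha(P)$ by~(\ref{eq:alpha}), so each of the at most $\diam(G_{\! _P})$ terms in $\lenij P$ is at most $1/\alpha(P)$, giving $\lenij P \leq \diam(G_{\! _P})/\alpha(P)$. For the Perron-vector weight, I would use the crude bound $\pi_i \, \pi_j \leq (\pi_{\max})^2$. Combining these with the fact that at most $b(G_{\! _P})$ paths pass through any fixed edge~$e$ yields
$$\sum_{e \in \gamma_{i\, j}} \lenij P \, \pi_i \, \pi_j \leq \frac{(\pi_{\max})^2 \, \diam(G_{\! _P})}{\alpha(P)} \, b(G_{\! _P}).$$
Taking the maximum over edges gives $\kDS(P) \leq (\pi_{\max})^2 \, \diam(G_{\! _P}) \, b(G_{\! _P}) / \alpha(P)$, and substituting into $\lambda_2(P) \leq 1 - 1/\kDS(P)$ produces exactly $\DS(P)$.

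Since every eigenvalue other than~$1$ is at most $\lambda_2(P)$, this bounds them all. I do not expect any genuine obstacle here: the argument is a chain of coarse inequalities rather than a delicate estimate. The one point worth flagging is conceptual rather than technical, namely that the same minimizing set of geodesics must be used to control both the path lengths (through $\diam$) and the per-edge load (through $b$), which is precisely the trade-off anticipated in the discussion preceding the statement about choosing paths that do not overload any single edge.
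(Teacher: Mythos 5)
Your proof is correct and follows exactly the route the paper intends: the paper states Corollary~\ref{cor:KDS} without an explicit proof, but the discussion preceding it (geodesic paths, the bottleneck measure $b(G_{\! _P})$ of~(\ref{eq:bdef}), and the bound $\lenij P \leq \diam(G_{\! _P})/\alpha(P)$ via~(\ref{eq:alpha})) is precisely your argument, plugged into Proposition~\ref{prop:specgapDS}. No gaps; the observation that one minimizing family of geodesics simultaneously controls path length and edge load is indeed the whole point.
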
	

\section{Upper bounds on the second singular value of a stochastic matrix}\label{sec:singular}

	
Let $A$ be any irreducible	stochastic matrix of size $n$ with positive diagonal entries.
If $\pi$ is the Perron vector of $A$, then the matrix $A^{\dag} A$ is  also stochastic,
	and the three stochastic matrices $A$, $A^{\dag}$, and $A^{\dag} A$
	share the same Perron vector~$\pi$.
Moreover, $A^{\dag} A$ is reversible and has $n$ non negative eigenvalues.

Propositions~\ref{prop:specgapnoot},~\ref{prop:specgapcha}, and~\ref{prop:specgapDS}
	provide lower bounds on the spectral gap of $ A^{\dag}A$, 
	 which involve  the positive coefficients 
	$ \pi_i (A^{\dag}A)_{i j}$ when positive.
These coefficients are  roughly  bounded below by~$ \alpha(A)^2/\pi_{\max}$ with 
	$\pi_{\max} = \max_{i \in [n]} \pi_i$ and  $\alpha(A)$ defined by (\ref{eq:alpha}).
	
Interestingly, a generalization of a result in~\cite{NOOT09} combined with Proposition~\ref{prop:specgapnoot}
	gives an analytic bound on the spectral gap that is linear in the coefficient~$ \alpha(A)$ 
	and that holds even when $A$ is non reversible. 
In the case the matrix $A$ is reversible,  a lower bound on the spectral gap of~$A$ easily provides 
	a lower bound on the spectral gap of~$A^{\dag}A$.

\subsection{Analytic bound}

We start with a lemma that has been established in~\cite{NOOT09} under the more restrictive assumption of doubly stochastic
	matrices.

\begin{lem}\label{lem:cut}
If $A$ is an irreducible stochastic matrix, then $$ \mu(  A^{\dag}  A ) \geq \alpha (A)/2 \, .$$
\end{lem}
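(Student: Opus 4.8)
The plan is to reduce the cut expression defining $\mu(A^{\dag}A)$ to a sum over the vertices that involves only the entries of $A$ itself, and then to exhibit a single vertex whose contribution already exceeds $\alpha(A)/2$. For the reduction I would use the identity $(A^{\dag})_{ik} = \frac{\pi_k}{\pi_i} A_{ki}$ from the Notation subsection, which gives $\pi_i (A^{\dag}A)_{ij} = \sum_k \pi_k A_{ki} A_{kj}$, where $\pi$ is the common Perron vector of $A$, $A^{\dag}$ and $A^{\dag}A$. Summing over $i \in S$ and $j \notin S$ and interchanging the order of summation turns the cut value into $\sum_k \pi_k\, a_k\, b_k$, where $a_k = \sum_{i \in S} A_{ki}$ and $b_k = \sum_{j \notin S} A_{kj} = 1 - a_k$ are the masses that vertex $k$ sends into $S$ and into its complement. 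It then suffices to bound this sum below by $\alpha(A)/2$ for every cut $\emptyset \subsetneq S \subsetneq [n]$.

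Since $A$ is irreducible, the cut $S$ is crossed by at least one edge $(p,q)$ with $p \in S$ and $q \notin S$, and I would focus on its tail $p$ and show that the single term $\pi_p a_p b_p$ already does the job; as all terms are nonnegative, the full sum is at least $\pi_p a_p b_p$. The crossing edge gives $b_p \geq A_{pq}$, hence $\pi_p b_p \geq \pi_p A_{pq} \geq \alpha(A)$; and here the positive-diagonal assumption enters, since the self-loop at $p$ keeps mass inside $S$, so that $a_p \geq A_{pp} > 0$ and, $(p,p)$ being then an edge of $G_{\!_A}$, also $\pi_p a_p \geq \pi_p A_{pp} \geq \alpha(A)$. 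Thus $p$ distributes a non-negligible amount of mass to each side of the cut. Finally, from $a_p + b_p = 1$ one has $\max(a_p,b_p) \geq 1/2$, so $a_p b_p = \min(a_p,b_p)\max(a_p,b_p) \geq \tfrac12 \min(a_p,b_p)$, whence $\pi_p a_p b_p \geq \tfrac12 \min(\pi_p a_p, \pi_p b_p) \geq \alpha(A)/2$.

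The main subtlety I expect is precisely the role of the positive-diagonal hypothesis, which is genuinely needed and not merely convenient: without it one can build an irreducible $A$ whose product $A^{\dag}A$ is reducible across some cut, forcing $\mu(A^{\dag}A) = 0$ while $\alpha(A)/2 > 0$ (for instance an $A$ with zero diagonal where every vertex sends all its mass to a single side of the cut). The positive diagonal is exactly what guarantees that the tail of a crossing edge puts mass on both sides, so that both $\pi_p a_p$ and $\pi_p b_p$ are bounded below by a single edge weight $\geq \alpha(A)$. The remaining elementary inequality $a_p b_p \geq \tfrac12\min(a_p,b_p)$ is what replaces the exact flow value by the uniform factor $1/2$ and accounts for the factor $1/2$ in the statement; the example $A = \bigl(\begin{smallmatrix} 1/2 & 1/2 \\ 1/2 & 1/2\end{smallmatrix}\bigr)$ shows the bound is tight.
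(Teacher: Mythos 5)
Your proof is correct, and it shares its opening move with the paper's: both expand $\pi_i\,(A^{\dag}A)_{ij} = \sum_k \pi_k A_{ki}A_{kj}$ so that the cut value becomes $\sum_k \pi_k\, a_k b_k$ with $a_k = \sum_{i\in S}A_{ki}$ and $b_k = 1-a_k$. From there the arguments genuinely diverge. The paper works globally: it splits $[n]$ into $S^+ = \{k : a_k > 1/2\}$ and its complement $S^-$ (on which $b_k \geq 1/2$), bounds the cut below by $\tfrac12\bigl(\sum_{k\in S^+}\pi_k b_k + \sum_{k\in S^-}\pi_k a_k\bigr)$, and then needs a case analysis: if some vertex lies in $S^-\cap S$ or in $S^+\cap([n]\setminus S)$, its diagonal entry finishes the job; otherwise $S^+=S$, and an outgoing plus an incoming crossing edge (no diagonal needed in that case) finish it. You instead localize to the single vertex $p$ that is the tail of one outgoing crossing edge: the self-loop gives $\pi_p a_p \geq \alpha(A)$, the crossing edge gives $\pi_p b_p \geq \alpha(A)$, and $a_p b_p \geq \tfrac12\min(a_p,b_p)$ — your version of the same ``one of $a_p, b_p$ is at least $1/2$'' trick — closes the argument with no case split. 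This is a real simplification; what the paper's longer route buys is only the side information that when $S^+=S$ the diagonal plays no role, whereas your argument leans on the self-loop for every cut. Your observation that positive diagonal entries are genuinely required is correct and worth making explicit: the lemma as literally stated omits this hypothesis, but it is the standing assumption of the section, the paper's own proof also uses it (in its first case), and your permutation-type counterexample is valid. One small error in an aside: the matrix $A = \bigl(\begin{smallmatrix} 1/2 & 1/2 \\ 1/2 & 1/2 \end{smallmatrix}\bigr)$ does not show the factor $1/2$ is tight, since there $\mu(A^{\dag}A) = 1/4$ while $\alpha(A)/2 = 1/8$; this does not affect the correctness of your proof.
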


\begin{proof}
Let $S$ be  any non empty subset of $ [n] $.
Since $A$ is a stochastic matrix, for every index $k \in [n]$, either $\sum_{i \in S} A_{k i } > 1/2$ or
	$\sum_{j\notin S} A_{k j} \geq 1/2$, and the two cases are exclusive, that is,
	the two subsets of $[n]$ defined by 
	$$ S^+ = \{ k \in [n] : \sum_{i \in S} A_{k i } > 1/2 \} \ \mbox{ and } \   S^- = \{ k \in [n] : \sum_{j \notin S} A_{k j } > 1/2 \} $$
	satisfy $ S^- =  [n] \setminus  S^+$.
Hence,  
		$$ \sum_{i \in S} \sum_{j\notin S} \pi_i \, \big (A^{\dag}  A \big)_{i \, j} 
		= \sum_{k \in [n]}  \sum_{i \in S} \sum_{j\notin S}  \pi_k A_{k i} A_{k j} 
		\geq \frac{1}{2}
		\left (  \sum_{k \in S^+ }  \sum_{j\notin S}  \pi_k  A_{k j}  + 
		          \sum_{k \in S^- }  \sum_{i \in S} \pi_k A_{k i}              \right) . $$
Then we consider the two following cases:
\begin{enumerate}
\item Either $ S^- \cap S \neq \emptyset$ or $ S^+ \cap ([n] \setminus S )\neq \emptyset $. 
If $\ell$ is in one of these two sets, then we obtain that
	$$  \sum_{i \in S} \sum_{j\notin S} \pi_i \, \big (A^{\dag}  A \big)_{i \, j} \geq \frac{ \pi_{\ell}  A_{\ell \ell}  }{2} .$$
		 
\item Otherwise,  $S^+ = S$.
Since $A$ is irreducible, the non-empty set $S$ has an outgoing edge $(k_1, j)$  and an  incoming edge  $(k_2,i)$
	in~$G_{_A}$.
It follows  that
	$$ \sum_{i \in S} \sum_{j\notin S} \pi_i \, \big (A^{\dag}  A \big)_{i \, j} \geq \frac{1}{2}
		\left (   \pi_{k_1}  A_{k_1 j}  +  \pi_{k_2} A_{k_2 i}              \right) . $$ 
\end{enumerate}
In both cases, we arrive at $ \sum_{i \in S} \sum_{j\notin S} \pi_i \, \big (A^{\dag}  A \big)_{i  j}  \geq \alpha(A)/2$.
\end{proof}

Applied to the stochastic matrix~$A^{\dag} A$, Proposition~\ref{prop:specgapnoot} takes 
	the form:
\begin{prop}\label{prop:specgapgen}
Let $A$ be an irreducible stochastic matrix with a positive diagonal.
The  matrix $A^{\dag} A$ has $n$ real eigenvalues that satisfy
	$$ 0 \leq \lambda_n(A^{\dag} A ) \leq \dots \leq  \lambda_2(A^{\dag} A ) \leq  
	 1-\frac{\alpha(A)}{ n-1 }    <  \lambda_1(A^{\dag} A )  = 1.$$
\end{prop}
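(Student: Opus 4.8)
The plan is to read this proposition as nothing more than Proposition~\ref{prop:specgapnoot} specialized to the reversible stochastic matrix $P = A^{\dag}A$, with the quantity $\mu(A^{\dag}A)$ controlled from below by Lemma~\ref{lem:cut}. Almost all of the genuine work has therefore already been carried out; what remains is to assemble the pieces and to justify the two endpoints of the displayed chain of inequalities.

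First I would record the structural facts about $A^{\dag}A$ noted just before Lemma~\ref{lem:cut}: since $A$ is irreducible with positive diagonal and Perron vector $\pi$, the matrix $A^{\dag}A$ is again a reversible stochastic matrix sharing the Perron vector $\pi$. Being $\pi$-self-adjoint, it is diagonalizable with $n$ real eigenvalues; and the identity $\langle x, A^{\dag}Ax\rangle_{\pi} = \langle Ax, Ax\rangle_{\pi} = \lVert Ax\rVert_{\pi}^2 \geq 0$ shows that $A^{\dag}A$ is positive semidefinite, so these eigenvalues are nonnegative, giving $0 \leq \lambda_n(A^{\dag}A)$. As $A^{\dag}A$ is stochastic, its spectral radius $1$ is the top eigenvalue, so $\lambda_1(A^{\dag}A) = 1$ and the eigenvalues may be ordered as in the statement.

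Next comes the core estimate. Applying Proposition~\ref{prop:specgapnoot} to $P = A^{\dag}A$ yields $\lambda_2(A^{\dag}A) \leq 1 - 1/\eta(A^{\dag}A)$ with $\eta(A^{\dag}A) = (n-1)/\big(2\,\mu(A^{\dag}A)\big)$. Lemma~\ref{lem:cut} gives $\mu(A^{\dag}A) \geq \alpha(A)/2$, so $\eta(A^{\dag}A) \leq (n-1)/\alpha(A)$, and hence $1/\eta(A^{\dag}A) \geq \alpha(A)/(n-1)$. Substituting back produces $\lambda_2(A^{\dag}A) \leq 1 - \alpha(A)/(n-1)$, the middle inequality. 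The only point requiring care here is the bookkeeping of directions: a \emph{lower} bound on $\mu$ turns into an \emph{upper} bound on $\eta$, hence into a \emph{lower} bound on $1/\eta$, hence into an \emph{upper} bound on $1 - 1/\eta$, and finally on $\lambda_2$.

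Finally I would dispatch the strict last inequality. Since every edge weight $\pi_i A_{ij}$ over an edge of $G_{\!_A}$ is strictly positive, $\alpha(A) > 0$, so $1 - \alpha(A)/(n-1) < 1 = \lambda_1(A^{\dag}A)$; this simultaneously forces $\lambda_2(A^{\dag}A) < 1$, i.e. the Perron eigenvalue of $A^{\dag}A$ is simple, in agreement with the positive-diagonal Perron--Frobenius remark made earlier (note $(A^{\dag}A)_{ii} \geq A_{ii}^2 > 0$). I do not expect any real obstacle: the sole nontrivial ingredient, the cut estimate $\mu(A^{\dag}A) \geq \alpha(A)/2$, is precisely Lemma~\ref{lem:cut}, already established, and the rest is the routine chaining of the bounds together with the verification of the boundary values.
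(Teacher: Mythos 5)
Your proposal is correct and matches the paper's own route exactly: the paper obtains Proposition~\ref{prop:specgapgen} precisely by applying Proposition~\ref{prop:specgapnoot} to $P=A^{\dag}A$ and invoking Lemma~\ref{lem:cut}, with the structural facts (stochasticity, reversibility, shared Perron vector, nonnegative spectrum of $A^{\dag}A$) recorded at the start of Section~\ref{sec:singular}. Your added verifications (positive semidefiniteness via $\langle x, A^{\dag}Ax\rangle_{\pi}=\lVert Ax\rVert_{\pi}^2$, the direction bookkeeping for $\mu$ versus $\eta$, and strictness from $\alpha(A)>0$) are sound and simply make explicit what the paper leaves implicit.
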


\subsection{The reversible case}

If the  stochastic matrix $A$ with positive diagonal is reversible, then 
	the $n$ eigenvalues of $A$ are all real and the Perron-Frobenius 
	theorem implies that  
	$$ -1  <  \lambda_n (A) \leq \dots \leq \lambda_2(A) <  \lambda_1(A) = 1 .$$

Similarly, the stochastic matrix $A^{\dag} A = A^2$ has $n$ real eigenvalues which, written in decreasing order, satisfy	
	$$ 0  \leq  \lambda_n (A^{\dag} A ) \leq \dots \leq \lambda_2(A^{\dag} A ) <   \lambda_1(A^{\dag} A ) = 1 . $$
Hence $ \lambda_2(A^{\dag} A ) =  \max ( \vert \lambda_n(A) \vert^2,  \vert \lambda_2 (A)\vert^2 ) $.

Propositions~\ref{prop:specgapnoot},~\ref{prop:specgapcha}, and~\ref{prop:specgapDS} show that 
	$$    \lambda_2(A) \leq 1 - \frac{ 1 }{ \min ( \eta(A), \kbcb(A), \kDS(A) )} .$$
Computing $\eta(A)$ 
	is difficult in general 
	and thus we keep on just with the two geometric bounds $\kbcb(A)$ and $ \kDS(A) $.

Every eigenvalue of~$A$ lies within at least one Gershgorin disc $D(A_{i i}, 1- A_{i i})$, 
	and thus
	\begin{equation}\label{eq:gershgorin}
	   -1 + 2 \, a (A)  \leq  	\lambda_n(A)  
	 \end{equation}
	where $a (A) = \min_{i \in [n]}   A_{i \, i}  $.
	
\begin{prop}\label{prop:specgapreversible}
Let $A$ be a irreducible stochastic matrix with a positive diagonal.
If $A$ is reversible, then the stochastic matrix $A^{\dag} A$ has $n$ real eigenvalues that satisfy
	$$ 0 \leq \lambda_n(A^{\dag} A ) \leq \dots \leq  \lambda_2(A^{\dag} A ) \leq  
	\left( 1 -  \min \left ( 2 \, a (A) \, , \,  \frac{1}{ \min ( \kbcb(A) , \kDS(A)  ) }  \right ) \right)^2 <  \lambda_1(A^{\dag} A )  = 1 $$
	 with $a(A) = \min_{i \in [n]}   A_{i \, i}  $, 
	 $\kbcb(A)$, and $\kDS(A)$ defined by  (\ref{eq:kbcb}) and (\ref{eq:kds}), respectively.
\end{prop}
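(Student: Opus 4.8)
The plan is to reduce everything to the spectral data of $A$ itself, exploiting reversibility. Since $A$ is reversible we have $A^{\dag} = A$ and therefore $A^{\dag}A = A^2$, whose eigenvalues are the squares of those of $A$. As already recorded in the text, the positive diagonal together with the Perron--Frobenius theorem forces $-1 < \lambda_n(A) \le \dots \le \lambda_2(A) < 1$, so that $\lambda_2(A^{\dag}A) = \max\bigl(\lambda_2(A)^2, \lambda_n(A)^2\bigr)$. Thus it suffices to trap the eigenvalues of $A$ other than $1$ inside a concrete real interval and then bound their squares.

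For the upper end I would invoke Propositions~\ref{prop:specgapcha} and~\ref{prop:specgapDS}, which give $\lambda_2(A) \le 1 - 1/\min\bigl(\kbcb(A),\kDS(A)\bigr)$, and hence the same upper bound for every eigenvalue $\lambda_i(A)$ with $i \ge 2$. For the lower end I would use the Gershgorin estimate~(\ref{eq:gershgorin}), namely $\lambda_n(A) \ge -1 + 2a(A)$ with $a(A) = \min_i A_{ii}$, so that every such $\lambda_i(A)$ also satisfies $\lambda_i(A) \ge -1 + 2a(A)$. Consequently each eigenvalue of $A$ distinct from $1$ lies in the interval $\bigl[\,-1 + 2a(A),\ 1 - 1/\min(\kbcb(A),\kDS(A))\,\bigr]$.

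Write $u = 2a(A)$ and $v = 1/\min(\kbcb(A),\kDS(A))$. On an interval $[p,q] \subseteq (-1,1)$ the square function attains its maximum at an endpoint, so each eigenvalue $\lambda$ of $A$ other than $1$ obeys $\lambda^2 \le \max\bigl((1-u)^2,(1-v)^2\bigr)$; feeding this into $\lambda_2(A^{\dag}A) = \max(\lambda_2(A)^2,\lambda_n(A)^2)$ yields the claimed bound once the right-hand side is rewritten as $\bigl(1 - \min(u,v)\bigr)^2$.

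The one point requiring care --- and the step I expect to be the only genuine obstacle --- is precisely this rewriting, i.e.\ establishing $\max\bigl((1-u)^2,(1-v)^2\bigr) = \bigl(1-\min(u,v)\bigr)^2$. This is not an unconditional identity: it can fail when one of the two defects is large while the other is small. The saving observation is that the eigenvalue $\lambda_2(A)$ actually lives in the interval above, so the interval is non-empty and hence $u + v \le 2$. Under this constraint $\min(u,v) \le 1$, so $1 - \min(u,v) \ge 0$, and a short case check shows that $1 - \min(u,v)$ dominates both $|1-u|$ and $|1-v|$, which gives the identity. This completes the argument, with the outer square faithfully recording that the second largest singular value of $A$ is the larger in modulus of $\lambda_2(A)$ and $\lambda_n(A)$.
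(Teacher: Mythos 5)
Your proof is correct and follows essentially the same route as the paper: reversibility gives $A^{\dag}A = A^2$, so $\lambda_2(A^{\dag}A) = \max\bigl(\lambda_2(A)^2, \lambda_n(A)^2\bigr)$, which is then bounded using Propositions~\ref{prop:specgapcha} and~\ref{prop:specgapDS} at the top of the spectrum and the Gershgorin estimate~(\ref{eq:gershgorin}) at the bottom. The only difference is that you spell out the final combination step --- verifying $\max\bigl((1-u)^2,(1-v)^2\bigr) = \bigl(1-\min(u,v)\bigr)^2$ via the constraint $u+v\le 2$ coming from the non-emptiness of the spectral interval --- which the paper leaves implicit.
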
	

Every path $\gamma$ in $G_A$ satisfies 
	$$ |\gamma |_{A} \leq \frac{ | \gamma |}{\alpha(A)} $$
	where $| \gamma | $ denotes $\gamma$'s length and $\alpha(A)$ is defined by~(\ref{eq:alpha}).
If each path set $\Gamma_{i j}$ is reduced to one shortest path from $i$ to $j$, then 
	$$ \kbcb(A)  \leq \frac{n-1}{\alpha(A)} .$$
Proposition~\ref{prop:specgapreversible}  thus  improves the 
	general bound of $ \alpha(A)/( n - 1 )$ in Proposition~\ref{prop:specgapgen} 
	when $A$ is reversible.	
%
	
\section{Averaging algorithms and convergence rates}

\subsection{Averaging algorithms, stochastic matrices and asymptotic consensus}

We consider a  discrete time system of $n$ autonomous agents, denoted $1, \dots,n$, 
	connected via a network that may change over time.
Communications at time $t$ are modelled by a directed graph~$\dG(t)= ([n],E(t))$.
Since an agent can communicate with itself instantaneously, there is  a self-loop at each node
	in every graph~$\dG(t)$.
The sets of incoming and outgoing neighbors of the agent~$i$ in $\dG(t)$ are denoted by~$\In_i(t)$ 
	and~$\Out_i(t)$, respectively.	
The sequence $\dG = \big( \dG(t)\big)_{t\geq 1}$ is called  \emph{the dynamic communication graph},
	or just the \emph{communication graph}.

In an averaging algorithm ${\cal A}$, each agent~$i$ maintains a local variable $x_i$, initialized to some scalar value $x_i(0)$,
	and applies an update rule of the form
	\begin{equation}\label{eq:calA}
	x_i(t) = \sum_{k \in \In_i(\dG(t))} A_{i k}(t) \, x_k(t-1) 
	\end{equation}
	with $A_{i k}(t)$ which are all positive and $\sum_{k \in In_i(\dG(t))} A_{i k}(t) = 1$.
The algorithm ${\cal A}$ precisely consists in the choice of the weights~$A_{i k}(t)$; typical  averaging algorithms
	are examined in Section~\ref{sec:applis}.
The update rule (\ref{eq:calA}) corresponds to the 	equation
	$$ x(t) = A(t) \, x(t-1)$$
	where $A(t)$ is the $n \times n$  stochastic matrix whose $(i,k)$-entry is the weight $A_{i k}(t)$ if $(k,i)$ is an edge in $\dG(t)$,
	and 0 otherwise. 
Hence, the directed graph associated to the matrix~$A(t)$ is the reverse graph of~$\dG(t)$.
			
An execution of ${\cal A}$ is totally determined by the initial state $x(0) \in \IR^n$
	 and the   communication graph $\dG$. 
We say that ${\cal A}$ {\em achieves asymptotic consensus} in an execution if the sequence~$x(t)$ converges
	to a vector $x^*$ that is colinear to $\one = (1,\dots,1)^{\rm{T}}$.
The {\em convergence rate} in this execution is  defined as 
	$$  \varrho =  \lim_{t\rightarrow \infty} \lVert x(t) - x^* \lVert^{1/t} $$
	where $ \lVert .  \lVert $ is any norm on $\IR^n$.
	
The classes of averaging algorithms under consideration and their executions
	are restricted by the following assumptions.
\begin{description}
\item[]A1:  All the directed graphs $\dG(t)$ have a self-loop at each node and are strongly connected.
\item[]A2:  There exists some positive lower bound on the positive entries of the matrices  $A(t)$.
\end{description}
Observe that A1 is equivalent to the fact that every matrix~$A(t)$ has a positive diagonal
	and is ergodic. 
As an immediate consequence of the fundamental convergence results in~\cite{Mor05,CMA08a}, 
	we have that asymptotic consensus is achieved in every run of an averaging algorithm satisfying~A1-2. 

\subsection{Case of a constant Perron vector}

Our first results concern  executions  that satisfy the following assumption in addition to~A1-2.
\begin{description}
\item[]A3:   All the matrices  $A(t)$ share the same Perron vector $\pi$.
\end{description}
Observe that under the assumption A3, the limit vector $x^*$, if exists, is equal to 
	$\sum_{i\in [n]} \pi_i x_i(0) \one$.
	
The assumption  A3 holds for time-varying communication graphs 
	that arise in diverse classical averaging algorithms  (e.g., see Section~\ref{sec:applis}).
Besides,  the validity of A2 and A3 allows us to introduce the two positive infima
	\begin{equation}\label{eq:aandalpha}
	a = \inf_{i \in [n] }  A_{i \, i} (t)  \ \mbox{ and } \ 
	 \alpha = \inf_{(i,j) \in E(t) } \pi_i A_{i \, j}(t) .
	 \end{equation}
The inequality  (\ref{eq:gershgorin}) shows that all the  eigenvalues of  the matrices $A(t)$  are uniformly 
	bounded below by $ - 1 + 2 a > -1$.
Moreover, since the number~$n$ of agents is fixed, the quantities  $\eta(A(t))$,
	 $\kbcb(A(t))$, and $\kDS(A(t))$ defined by (\ref{eq:eta}), (\ref{eq:kbcb}), and~(\ref{eq:kds}),
	 respectively, 
	 are uniformly  bounded from the above.
	 
\begin{thm}\label{thm:convergencerate}
In any of its executions  satisfying the  assumptions A1-3, 	 an averaging 
	algorithm achieves asymptotic consensus with a convergence rate
	$$ \varrho \leq \sup_{t\geq 1} \, \sqrt{ \lambda_2 \big ( A(t)^{\dag}A(t)  \big )} .     $$
	 \end{thm}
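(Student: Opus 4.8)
The plan is to track the evolution of the error vector $e(t) = x(t) - x^*$ in the Euclidean structure attached to the common Perron vector $\pi$, and to show that each matrix $A(t)$ contracts this vector by a factor $\sqrt{\lambda_2\big(A(t)^{\dag}A(t)\big)}$. Since all norms on $\IR^n$ are equivalent, the limit defining $\varrho$ is independent of the chosen norm, and I would carry out the computation with $\lVert . \lVert_{\pi}$.

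First I would record two elementary facts. As observed under A3, the limit vector is $x^* = \big(\sum_{i\in[n]} \pi_i x_i(0)\big)\one$, so $x^*\in\Delta$ and $\langle e(0),\one\rangle_{\pi}=0$, that is, $e(0)\in\Delta^{\bot_{\pi}}$. Moreover every $A(t)$ is stochastic, hence $A(t)\one=\one$ and therefore $A(t)x^*=x^*$; subtracting this from the update $x(t)=A(t)x(t-1)$ yields the clean recursion $e(t)=A(t)\,e(t-1)$, whence $e(t)=A(t)A(t-1)\cdots A(1)\,e(0)$. Because each $A(t)$ has $\pi$ as its Perron vector, the subspace $\Delta^{\bot_{\pi}}=\Delta^{\bot_{A(t)}}$ is stable under $A(t)$, as established for any irreducible stochastic matrix in the preliminaries, so by induction $e(t)\in\Delta^{\bot_{\pi}}$ for all $t$.

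The key step is the per-factor contraction estimate. For $x\in\Delta^{\bot_{\pi}}$ one has $\lVert A(t)x\lVert_{\pi}^2=\langle A(t)x,A(t)x\rangle_{\pi}=\langle x,A(t)^{\dag}A(t)\,x\rangle_{\pi}$. The matrix $A(t)^{\dag}A(t)$ is a reversible stochastic matrix with Perron vector $\pi$, as recalled at the beginning of Section~\ref{sec:singular}, so applying Lemma~\ref{lem:gamma} to $P=A(t)^{\dag}A(t)$ with $\gamma=1-\lambda_2\big(A(t)^{\dag}A(t)\big)$ gives $\Q_{P}(x)\geq\gamma\lVert x\lVert_{\pi}^2$ on $\Delta^{\bot_{\pi}}$, i.e. $\langle x,A(t)^{\dag}A(t)x\rangle_{\pi}\leq\lambda_2\big(A(t)^{\dag}A(t)\big)\lVert x\lVert_{\pi}^2$. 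Hence $\lVert A(t)x\lVert_{\pi}\leq\sqrt{\lambda_2\big(A(t)^{\dag}A(t)\big)}\,\lVert x\lVert_{\pi}$ for every $x\in\Delta^{\bot_{\pi}}$, which identifies the $\pi$-operator norm of $A(t)$ restricted to $\Delta^{\bot_{\pi}}$.

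Finally I would iterate. Setting $\sigma=\sup_{t\geq 1}\sqrt{\lambda_2\big(A(t)^{\dag}A(t)\big)}$ and applying the contraction estimate successively along the orbit $e(0),e(1),\dots$, which stays in $\Delta^{\bot_{\pi}}$, gives $\lVert e(t)\lVert_{\pi}\leq\sigma^{t}\,\lVert e(0)\lVert_{\pi}$. Taking $t$-th roots, letting $t\to\infty$, and using $\lVert e(0)\lVert_{\pi}^{1/t}\to 1$, I obtain $\varrho=\lim_{t}\lVert e(t)\lVert_{\pi}^{1/t}\leq\sigma$, which is the claimed bound. I expect the main obstacle to be the bookkeeping that makes the single-step contraction legitimate, namely the invariance of $\Delta^{\bot_{\pi}}$ under every $A(t)$ combined with the identification of the $\pi$-operator norm of $A(t)$ on this subspace with $\sqrt{\lambda_2\big(A(t)^{\dag}A(t)\big)}$; once this is secured the iteration is routine. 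Uniform strict contraction $\sigma<1$, which guarantees genuine consensus, follows separately from A2--A3 via Proposition~\ref{prop:specgapgen}, but is not needed for the stated inequality.
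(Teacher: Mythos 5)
Your proof is correct and follows essentially the same route as the paper: your error vector $e(t)=x(t)-x^*$ coincides with the paper's $\pi$-orthogonal projection $y(t)$ of $x(t)$ onto $\Delta^{\bot_{\pi}}$, and your single-step contraction $\lVert A(t)x\lVert_{\pi}^2=\langle x,A(t)^{\dag}A(t)x\rangle_{\pi}\leq\lambda_2\big(A(t)^{\dag}A(t)\big)\lVert x\lVert_{\pi}^2$ via Lemma~\ref{lem:gamma} is exactly the paper's variance decrement $\V(t-1)-\V(t)=\Q_{A(t)^{\dag}A(t)}(y(t-1))$ in operator-norm form. The only cosmetic difference is that you propagate membership in $\Delta^{\bot_{\pi}}$ by invariance of that subspace, while the paper equivalently observes that the projection of $x(t)$ onto $\Delta$ is constant.
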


\begin{proof}
Let $y(t)$ denote the $\pi$-orthogonal of $x(t)$ on $\Delta^{\bot_{\pi}}$.
Since $A(t)^{\dag}$ is stochastic, then
	$$ \langle x(t),\one \rangle_{\pi}  = \langle A(t) \, x(t-1),  \one \rangle_{\pi} =  \langle x(t-1),  \one \rangle_{\pi}. $$
Therefore, the orthogonal  projection of $x(t)$ on $\Delta$ is constant and $y(t) = A(t) y(t-1)$.
Let $\V(t)$ be the variance of~$x(t)$, that is
	 $$\V(t) = \lVert x(t) - a \one \lVert^2_{\pi} = \lVert  y(t)  \lVert^2_{\pi}\,  $$
	with $a =       \langle x(0),\one \rangle_{\pi} $.                         
Then 
	$$  \V(t-1) - \V(t)  =      \left \langle y(t - 1), y(t-1) \right \rangle_{\pi} - \left  \langle A(t)  y (t-1 ),A(t)   y (t -1 )  \right  \rangle_{\pi}     
	 			= \Q_{A(t)^{\dag}A(t)} \left( y(t-1) \right) .$$
				                                            
By Proposition~\ref{pro:greengen}, it follows that $\V$ is non-increasing.
Moreover, the variational characterization in Lemma~\ref{lem:gamma} shows that
	$$ \V(t) \leq \beta^t \, \V(0) \, ,$$	
	where $\beta$ is any uniform upper bound on the second largest eigenvalues of the matrices $ A(t)^{\dag}A(t)$.
\end{proof}

In addition to A1-3, we may assume permanent reversibility.
\begin{description}
\item[] A4:  All the matrices  $A(t)$ are reversible.
\end{description}

\begin{cor}\label{cor:reversiblerate}
In any of its executions  satisfying the  assumptions A1-4, 	 an averaging 
	algorithm achieves asymptotic consensus with a convergence rate
	$$ \varrho \leq 1 -  \min \left ( 2 \, a  \, , \,  \frac{1}{ \min (  \kbcb , \kDS ) }  \right )    , $$
	 where $a$ is defined by~(\ref{eq:aandalpha}), and 
	 $\kbcb$ and $\kDS$ are uniform upper bounds on $\kbcb(A(t))$ and  $\kDS(A(t))$.
\end{cor}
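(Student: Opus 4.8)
The plan is to combine Theorem~\ref{thm:convergencerate} with Proposition~\ref{prop:specgapreversible}, the only extra ingredient being the reversibility assumption A4 applied at every time step. Since A1--3 are in force, Theorem~\ref{thm:convergencerate} already provides asymptotic consensus together with the estimate
$$ \varrho \leq \sup_{t\geq 1} \sqrt{\lambda_2\big(A(t)^{\dag}A(t)\big)} , $$
so it remains only to bound the square root uniformly in $t$.

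Next I would fix $t$ and verify that $A(t)$ meets the hypotheses of Proposition~\ref{prop:specgapreversible}. By A1 each $\dG(t)$ is strongly connected with a self-loop at every node; its reverse graph $G_{_{A(t)}}$ is therefore strongly connected as well, so $A(t)$ is irreducible with a positive diagonal, and A4 makes it reversible. Proposition~\ref{prop:specgapreversible} then yields, for each $t$,
$$ \lambda_2\big(A(t)^{\dag}A(t)\big) \leq \left( 1 - \min\left( 2\,a(A(t)) \,,\, \frac{1}{\min(\kbcb(A(t)),\kDS(A(t)))} \right) \right)^2 . $$
The base of this square is nonnegative (it dominates the moduli $|\lambda_2(A(t))|$ and $|\lambda_n(A(t))|$, both lying in $[0,1)$), so taking square roots simply strips the exponent:
$$ \sqrt{\lambda_2\big(A(t)^{\dag}A(t)\big)} \leq 1 - \min\left( 2\,a(A(t)) \,,\, \frac{1}{\min(\kbcb(A(t)),\kDS(A(t)))} \right) . $$

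Finally I would pass to the supremum over $t$ by a monotonicity argument. By definition $a = \inf_{t} a(A(t))$ gives $2a \leq 2\,a(A(t))$, while $\kbcb \geq \kbcb(A(t))$ and $\kDS \geq \kDS(A(t))$ give $\min(\kbcb,\kDS) \geq \min(\kbcb(A(t)),\kDS(A(t)))$ and hence, upon inverting, $1/\min(\kbcb,\kDS) \leq 1/\min(\kbcb(A(t)),\kDS(A(t)))$. Consequently the uniform $\min$ is no larger than the per-time one, so
$$ 1 - \min\left( 2a, \frac{1}{\min(\kbcb,\kDS)} \right) \geq 1 - \min\left( 2\,a(A(t)), \frac{1}{\min(\kbcb(A(t)),\kDS(A(t)))} \right) \geq \sqrt{\lambda_2\big(A(t)^{\dag}A(t)\big)} $$
for every $t$; taking the supremum and recalling the first display closes the argument.

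The proof is essentially a citation of two earlier results, so I do not expect a genuine obstacle. The only point demanding care is the bookkeeping in this last step: one must check that replacing each time-dependent quantity by its uniform counterpart \emph{enlarges} $1 - \min(\cdot,\cdot)$, so that the bound survives the passage to the supremum rather than being violated by it.
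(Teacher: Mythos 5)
Your proof is correct and takes essentially the same route as the paper: the paper's own two-line proof likewise applies Proposition~\ref{prop:specgapreversible} at each time $t$ (stated directly with the uniform quantities $a$, $\kbcb$, $\kDS$) and then invokes Theorem~\ref{thm:convergencerate}. You merely make explicit the details the paper leaves implicit --- checking the hypotheses of the proposition, justifying that the square root can be taken because the base dominates the moduli of the eigenvalues, and the monotonicity bookkeeping when passing from per-time to uniform bounds --- all of which are sound.
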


\begin{proof}
Let $\kbcb$ and $\kDS$ be uniform upper bounds on $\kbcb(A(t))$ and  $\kDS(A(t))$, respectively (assumptions A2-3).
Proposition~\ref{prop:specgapreversible} shows that for any positive integer $t$,
	$$ \lambda_2 \big ( A(t)^{\dag}A(t) \big ) \leq  \left (1 -  \min \left ( 2 \, a  \, , \,  \frac{1}{ \min (  \kbcb , \kDS ) }  \right ) \right)^2 . $$
The result immediately follows from Theorem~\ref{thm:convergencerate}.
\end{proof}

%
If at every time~$t$, the matrix $A(t)$ is symmetric and  $\dG(t)$  is the complete graph, 
	then  Corollary~\ref{cor:reversiblerate} gives the bound 
	$$ \varrho \leq 1 -  \frac{\inf_{i , j \in [n]^2, \ t \geq 1 } A_{i j }(t)  }{ n }    .$$
This is the bound  developed by Cucker and Smale~\cite{CS07} to analyze the formation of flocks 
	in a population of autonomous agents which move together.
	
\subsection{Small variations of the Perron vector}

Theorem~\ref{thm:convergencerate} shows that in any execution 
	of the \emph{EqualNeighbor} algorithm -- where the weights and the entries of Perron vectors
	are bounded below by $1/n$  and  $1/n^2$, respectively
	(cf. Section~\ref{sec:applis}) -- 
	the convergence rate is in $1- O(n^{-3})$ if the Perron vector is constant.
With time-varying Perron vectors, no polynomial bound holds.
Indeed, Olshevsky and Tsitsiklis~\cite{OT11} proved that the convergence time of this averaging  algorithm 
	is exponentially large in an execution where the support of the communication graph
	is fixed but agents move from one node to another node: 
	in the $n/2$-periodic   communication graph formed with bidirectional 2-stars of size~$n$, 
	the convergence rate is larger than $1- 2^{3 - n/2}$ while entries of each Perron vector
	is greater than $1/6$ for the two centers and greater than  $2/3n $ for the other agents.
	
Our next result, which consists in an extension of Theorem~\ref{thm:convergencerate} to the case of a time-varying Perron vector,
	provides a heuristic analysis of convergence rates:
	an exponential convergence time as in the above example may occur 
	only if the Perron vector of the matrices~$A(t)$
	significantly vary over time.
		
\vspace{0.3cm}	
	
We start by weakening the  assumption A3.
\begin{description}
\item[]A3b:  Entries of the Perron vectors are uniformly lower bounded by some 
	positive real number.
\end{description}
Under the assumption A3b, the infima $a$ and $\alpha$ defined by~(\ref{eq:aandalpha}) are still
	positive.
Moreover, the quantity
	\begin{equation}\label{eq:beta}
	 \nu = \sup_{i \in [n] , \, t >0 }  \sqrt{\frac{\pi_i (t+1) }{\pi_i(t)}  }
	 \end{equation}
	 is finite.

\begin{thm}\label{thm:smallvariations}
In any of its executions  satisfying the  assumptions A1-2 and A3b, 
	 an averaging algorithm achieves asymptotic consensus with a convergence rate
	 $$ \varrho \leq \nu \,  \sup_{t\geq 1} \, \sqrt{ \lambda_2 \big ( A(t)^{\dag}A(t)  \big )} .     $$
\end{thm}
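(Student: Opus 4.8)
The plan is to replay the variance-contraction argument of Theorem~\ref{thm:convergencerate}, but now measuring the variance against the \emph{current} Perron vector and absorbing the mismatch between consecutive inner products into the factor $\nu$. Write $\pi(t)$ for the Perron vector of $A(t)$, let $y(t)$ be the $\pi(t)$-orthogonal projection of $x(t)$ onto $\Delta^{\bot_{\pi(t)}}$, set $a_t = \langle x(t),\one\rangle_{\pi(t)}$ so that $y(t)=x(t)-a_t\one$, and define $\V(t) = \lVert y(t)\lVert_{\pi(t)}^2$. The target is the one-step estimate $\V(t) \leq \nu^2\,\lambda_2\big(A(t)^{\dag}A(t)\big)\,\V(t-1)$, from which $\V(t)\leq(\nu^2\beta)^t\,\V(0)$ follows with $\beta=\sup_{t\geq1}\lambda_2\big(A(t)^{\dag}A(t)\big)$.

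First I would check that $y(t)$ is exactly the $\pi(t)$-projection of $A(t)\,y(t-1)$ onto $\Delta^{\bot_{\pi(t)}}$. From $x(t)=A(t)x(t-1)$ and $A(t)\one=\one$ one gets $A(t)y(t-1)=x(t)-a_{t-1}\one$; since $A(t)^{\mathrm{T}}\pi(t)=\pi(t)$ gives $\langle x(t),\one\rangle_{\pi(t)}=\langle x(t-1),\one\rangle_{\pi(t)}$, removing the $\Delta$-component of $A(t)y(t-1)$ in the $\pi(t)$ inner product returns precisely $y(t)$. The crux — and the step I expect to be the main obstacle — is that $y(t-1)$ lies in $\Delta^{\bot_{\pi(t-1)}}$, whereas the singular-value bound for $A(t)$, namely $\lVert A(t)v\lVert_{\pi(t)}\leq\sqrt{\lambda_2(A(t)^{\dag}A(t))}\,\lVert v\lVert_{\pi(t)}$ for $v\in\Delta^{\bot_{\pi(t)}}$ (obtained by applying Lemma~\ref{lem:gamma} to the reversible matrix $A(t)^{\dag}A(t)$, via $\langle v,A(t)^{\dag}A(t)v\rangle_{\pi(t)}\leq\lambda_2\lVert v\lVert_{\pi(t)}^2$), lives in the space $\Delta^{\bot_{\pi(t)}}$. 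To reconcile the two, I would decompose $y(t-1)=u+c\,\one$ orthogonally \emph{in the $\pi(t)$ inner product}, where $u\in\Delta^{\bot_{\pi(t)}}$; then $A(t)y(t-1)=A(t)u+c\,\one$ with $A(t)u\in\Delta^{\bot_{\pi(t)}}$ (this space is $A(t)$-invariant because $\pi(t)$ is its Perron vector), so the projection equals $A(t)u$ and hence $\V(t)=\lVert A(t)u\lVert_{\pi(t)}^2\leq\lambda_2(A(t)^{\dag}A(t))\,\lVert u\lVert_{\pi(t)}^2\leq\lambda_2(A(t)^{\dag}A(t))\,\lVert y(t-1)\lVert_{\pi(t)}^2$, the last inequality because orthogonal projection does not increase the norm. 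Finally I would convert $\lVert y(t-1)\lVert_{\pi(t)}^2=\sum_i\pi_i(t)\,y_i(t-1)^2$ into $\V(t-1)=\sum_i\pi_i(t-1)\,y_i(t-1)^2$ at the cost of $\sup_i\pi_i(t)/\pi_i(t-1)\leq\nu^2$, which closes the recursion.

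It remains to pass from the geometric decay of $\V(t)$ to the convergence rate. Since A1--2 already guarantee asymptotic consensus, the limit $x^{\ast}=c^{\ast}\one$ exists with $c^{\ast}$ in the range of every iterate, so $\min_i x_i(t)\leq c^{\ast}\leq\max_i x_i(t)$. Using the translation-invariance of $\N$ along $\one$, this gives $\lVert x(t)-x^{\ast}\lVert_{\infty}\leq\N(x(t))=\N(y(t))\leq2\lVert y(t)\lVert_{\infty}$, and the uniform lower bound $p>0$ on the Perron entries (assumption A3b) yields $\lVert y(t)\lVert_{\infty}\leq\V(t)^{1/2}/\sqrt{p}$ via $\V(t)=\sum_i\pi_i(t)\,y_i(t)^2\geq p\,\lVert y(t)\lVert_{\infty}^2$. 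Combining, $\lVert x(t)-x^{\ast}\lVert_{\infty}\leq(2/\sqrt{p})\,(\nu^2\beta)^{t/2}\,\V(0)^{1/2}$; taking $t$-th roots and letting $t\to\infty$ gives $\varrho\leq(\nu^2\beta)^{1/2}=\nu\sqrt{\beta}=\nu\,\sup_{t\geq1}\sqrt{\lambda_2\big(A(t)^{\dag}A(t)\big)}$, as claimed. The case $\nu=1$ recovers exactly Theorem~\ref{thm:convergencerate}.
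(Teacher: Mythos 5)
Your proof is correct and follows essentially the same route as the paper: your decomposition $y(t-1)=u+c\,\one$ with the bound $\lVert A(t)u\lVert_{\pi(t)}\leq\sqrt{\lambda_2\big(A(t)^{\dag}A(t)\big)}\,\lVert u\lVert_{\pi(t)}$ and the change-of-measure factor $\max_{i}\pi_i(t)/\pi_i(t-1)\leq\nu^2$ is exactly the paper's quotient-norm recursion $\W(t)\leq\Vert\left[A(t)\right]\Vert_{\pi(t)}^2\,\max_i\big(\pi_i(t)/\pi_i(t-1)\big)\,\W(t-1)$, rewritten in terms of explicit orthogonal projections instead of quotient norms (the two formulations coincide since the quotient norm of a class equals the norm of its projection onto $\Delta^{\bot_{\pi(t)}}$). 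The only difference is that you additionally spell out the final conversion from the geometric decay of $\V(t)$ to the bound on $\varrho$ via the semi-norm $\N$ and the lower bound on Perron entries, a step the paper leaves implicit.
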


\begin{proof}
For any norm $\Vert . \Vert$ on $\IR^n$, let $\Vert . \Vert_{\IR^n/\Delta}$ denote the quotient norm  on the quotient vector
	space~$\IR^n/\Delta$, given by
	$$ \Vert \left [x\right ] \Vert_{\IR^n/\Delta} = \inf_{y\in \left [x\right ]} \Vert y  \Vert $$
	where $  \left [x\right ] = x + \Delta$.
It will be simply denoted $ \Vert \left [x\right ] \Vert $, as no confusion can arise.
In the case of the Euclidean norm $\Vert . \Vert_{\pi}$, we have
	 $$ \Vert \left [x\right ] \Vert_{\pi} = \Vert y  \Vert_{\pi} , $$
	where $y$ is the orthogonal projection of $x$ onto  $\Delta^{\bot_{\pi}}$.
		
If $\Delta$ is  an invariant subspace of the linear operator $A: \IR^n  \rightarrow \IR^n$, then let 
	$\left [A\right ] :  \IR^n/\Delta \rightarrow \IR^n/\Delta$  denote the  corresponding quotient operator.
The operator norm of $ \left [A\right ] $ associated to quotient norm $\Vert . \Vert_{\pi}$ is defined 
	as~$ \Vert \left [A\right ]  \Vert_{\pi} = \sup_{\left [x\right ] \neq 0}  \left( \Vert \left [A\right ]  \left [x\right ] \Vert_{\pi} / \Vert \left [x\right ] \Vert_{\pi} \right)$.
One can easily check that 
	$$ \Vert \left [A\right ]  \Vert_{\pi} = 
	      \sup_{y \in \Delta^{\bot_{\pi}} \setminus\{0\}}  \frac{ \Vert  A \, y \Vert_{\pi} }{\Vert y \Vert_{\pi} }  ,$$
	     i.e.,  $ \Vert \left [A\right ]  \Vert_{\pi} = \Vert A/  \Delta^{\bot_{\pi}}  \Vert_{\pi} $.
Hence $ \Vert \left [A\right ]  \Vert_{\pi} = \sqrt{ \lambda_2(A^{\dag}A)}$.

Let $x\in \IR^n$, and let $\pi$ and $\pi'$ be two positive probability vector.
We easily get  that 
	$$ \Vert x  \Vert_{\pi'}^2  \leq  \Vert x  \Vert_{\pi}^2  \ \max_{i \in [n]} \  \frac{\pi_i'}{\pi_i}  ,$$
	which implies that
	$$ \Vert \left [x\right ]  \Vert_{\pi'}^2  \leq \Vert \left [x\right ]  \Vert_{\pi}^2  \ \max_{i \in [n]} \  \frac{\pi_i'}{\pi_i} .$$

Let us now introduce the quotient form of $\V(t)$ defined as 
	$$\W(t) = \Vert \left [ x (t) \right] \Vert_{\pi (t)}^2 . $$
Then we have 
	\begin{equation}\label{eq:recbeta}
	 \W(t) =  \Vert \left [A(t)\right ]  \, \left [x (t-1)\right ]  \Vert_{\pi (t)}^2 \leq 
	                                    \Vert \left [A(t)\right ] \Vert_{\pi (t)}^2 \,  \Vert \left [x(t-1)\right ] \Vert_{\pi (t)}^2 
	 \end{equation}
	 and thus
	 $$ \W(t)   \leq  \Vert  \left [A(t)\right ] \Vert_{\pi (t)}^2 \,   \max_{i \in [n]}  \ \frac{\pi_i (t)}{\pi_i (t-1)}  \ 
	                                    \W(t-1) , $$
	which completes the proof.
	\end{proof}

The bound  in Theorem~\ref{thm:smallvariations} allows for a qualitative 
	analysis of convergence time, but is quantitatively  trivial in most cases.
However, the recurring inequality~(\ref{eq:recbeta}) on which it is based may be also helpful
	from a quantitative viewpoint, e.g., for controlling convergence times in case the Perron vector eventually stabilizes~\cite{CBLM20}.

%

\section{Metropolis, EqualNeighbor, and FixedWeight algorithms}\label{sec:applis}

We now examine three fundamental averaging algorithms, classically called  \emph{Metropolis}, \emph{EqualNeighbor},
	and  \emph{FixedWeight}, which all achieve asymptotic consensus if the (time-varying) topology is 
	permanently strongly connected.
While the EqualNeighbor algorithm is directly implementable in a distributed setting, FixedWeight  requires the agents to have 
	knowledge over time: the topology may vary, but each agent is supposed to know an upper bound on its in-degrees.
As for Metropolis, it requires the agents to have knowledge at distance one:  each agent is supposed to know 
	the current in-degree of its neighbors.
	
For each of these algorithms, the Perron vectors are constant for large classes of   time-varying topologies:
	when the communication graph is permanently bidirectional this holds for the Metropolis algorithm, 
	when it is permanently Eulerian,  
	for the FixedWeight algorithm, and when it is permanently Eulerian with constant (in time or in space) degrees, for  EqualNeighbor.
In each of these cases, the corresponding stochastic matrices are all reversible and  thus 
	 Corollary~\ref{cor:reversiblerate} applies. 	
	
 \subsection{Algorithms and simplified bounds}
 
First, let us fix some notation.
If $p(G)$ denotes any parameter of a directed graph~$G$, let $p(\dG)$ denote the associated  parameter 
 	for the dynamic graph~$\dG$ defined as
	$$ p(\dG) = \sup_{t\geq 1} \, p(\dG(t)) .$$
For instance,  $\diam(\dG)$ denotes  the diameter of $\dG$, $\diamnorm(\dG)$ its normalized diameter, and
	$\bottle(\dG)$ its bottleneck measure.
Similarly, if $d_i(t)$ denotes the  in-degree of~$i$ in~$\dG(t)$ and~$\dmax(t)$ the maximum in-degree in this graph
	(i.e., $\dmax(t) = \max_{i \in [n]}d_i(t)$), then 
	$$ \dmax(\dG) = \max_{i \in [n], \ t\geq 1} \, d_i(t) .$$

\paragraph{Metropolis algorithm with a time-varying  bidirectional topology.}
Weights in the  the Metropolis algorithm are given by
	$$ M_{i j}(t) = \left\{ \begin{array}{ll}
	                     \frac{1}{\max \left ( d_i (t), d_j(t)\right)} & \ \mbox{ if } \ j\in  \In_i (t) \setminus \{i\} \\
	                     1- \sum_{ j\in {\cal N}_i (t) \setminus \{i\} } \frac{1}{\max \left ( d_i (t), d_j(t)\right)}    & \ \mbox{ if } \ j= i  \\
	                     0 & \ \mbox{ otherwise .} 
	                     \end{array}  \right.$$

If  $\dG(t) $ is bidirectional,  then the matrix $M(t)$ is symmetric, and so doubly stochastic.
Its Perron vector  is $(\frac{1}{n}, \dots, \frac{1}{n})^{\rm{T}}$.
In any execution of Metropolis with a   communication graph that is permanently  bidirectional, 
	the Perron vector is therefore constant.
Furthermore, the quantities $a$ and $\alpha$ in~(\ref{eq:aandalpha}) satisfy
	$a \geq 1/\dmax $ and $\alpha \geq 1/( n \, \dmax  )$.
Therefore Corollary~\ref{cor:reversiblerate} takes the form:

\begin{cor}\label{cor:Met}
In any execution of the Metropolis algorithm with a   communication graph~$\dG$ that is permanently  bidirectional, 
	the convergence rate~$\varrho$  satisfies
		 $$\varrho \leq 1-   \min \left ( \frac{2}{\dmax}, \max \left ( 
				\frac{1}{\,  n \, \diamnorm  \,   \dmax } , 
				 \frac{ n }{ \,  \diam \, \bottle \,  \dmax}   \right )  \right )$$ 
	where $\bottle = \bottle(\dG)$, $\diam= \diam(\dG)$, $\diamnorm = \diamnorm(\dG)$, 
	 and $\dmax = \dmax (\dG)$.
\end{cor}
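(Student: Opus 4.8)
The plan is to specialize Corollary~\ref{cor:reversiblerate} to the Metropolis weights $M(t)$ on a permanently bidirectional communication graph, and then to bound the two geometric quantities $\kbcb$ and $\kDS$ via their simplified surrogates from Corollaries~\ref{cor:Kbcb} and~\ref{cor:KDS}. First I would record that bidirectionality makes each $M(t)$ symmetric, hence doubly stochastic, with constant Perron vector $\pi = \frac{1}{n}\one$; this is exactly what validates A3 and, since a symmetric stochastic matrix is $\pi$-self-adjoint for the uniform $\pi$, also A4. Thus Corollary~\ref{cor:reversiblerate} applies and yields
$$ \varrho \leq 1 - \min\left(2a,\ \frac{1}{\min(\kbcb,\kDS)}\right). $$

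Next I would substitute the elementary estimates for $a$ and $\alpha$. Each off-diagonal weight is $1/\max(d_i,d_j) \geq 1/\dmax$, so the diagonal entry $M_{ii}(t)$ is bounded below and $a \geq 1/\dmax$; likewise $\alpha = \inf \pi_i M_{ij}(t) = \inf \frac{1}{n} M_{ij}(t) \geq 1/(n\,\dmax)$. The main content is then to push these into the two geometric bounds. For $\kbcb$ I would invoke Corollary~\ref{cor:Kbcb}, which gives $\kbcb(M(t)) \leq \diamnorm(\dG_{M(t)})/\alpha(M(t))$; since the reverse graph of a bidirectional $\dG(t)$ is $\dG(t)$ itself, $\diamnorm(\dG_{M(t)}) = \diamnorm(\dG(t)) \leq \diamnorm$, and with $\alpha \geq 1/(n\,\dmax)$ this produces $\kbcb \leq n\,\diamnorm\,\dmax$, i.e.\ $1/\kbcb \geq 1/(n\,\diamnorm\,\dmax)$. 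For $\kDS$ I would use Corollary~\ref{cor:KDS} with $\pi_{\max} = 1/n$, so $(\pi_{\max})^2 = 1/n^2$, giving the reciprocal bound $1/\kDS \geq \alpha/((\pi_{\max})^2\,\diam\,\bottle) \geq \frac{1/(n\,\dmax)}{(1/n^2)\,\diam\,\bottle} = n/(\diam\,\bottle\,\dmax)$.

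Finally I would assemble the pieces. Since $\frac{1}{\min(\kbcb,\kDS)} = \max\!\left(\frac{1}{\kbcb},\frac{1}{\kDS}\right)$, the two lower bounds just derived combine into the inner $\max$ of the stated formula, and substituting $2a \geq 2/\dmax$ into the outer $\min$ gives exactly the claimed bound. The routine part is the weight arithmetic; the one place demanding care is the passage from $M(t)$ to its associated graph, namely checking that the uniform suprema $\diamnorm,\diam,\bottle$ over $\dG$ dominate the per-time quantities $\diamnorm(\dG_{M(t)})$ etc.\ and that taking suprema of the $1/\kbcb$- and $1/\kDS$-bounds is compatible with the uniform $\kbcb,\kDS$ required by Corollary~\ref{cor:reversiblerate}. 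I would therefore phrase the graph-parameter bounds uniformly in $t$ from the outset, so that no monotonicity subtlety arises when passing to the supremum; this bookkeeping, rather than any real inequality, is the only potential obstacle.
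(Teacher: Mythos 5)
Your proposal is correct and follows essentially the same route as the paper: verify that bidirectionality makes each $M(t)$ symmetric (hence doubly stochastic, reversible, with constant Perron vector $\frac{1}{n}\one$), record $a \geq 1/\dmax$ and $\alpha \geq 1/(n\,\dmax)$, and then specialize Corollary~\ref{cor:reversiblerate} using the simplified geometric estimates of Corollaries~\ref{cor:Kbcb} and~\ref{cor:KDS} with $\pi_{\max}=1/n$. Your explicit checks that the reverse graph of a bidirectional graph is itself, and that the per-time bounds are uniform in $t$, are details the paper leaves implicit but are exactly the right bookkeeping.
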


\paragraph{EqualNeighbor algorithm with an Eulerian topology and constant degrees.}
Weights in the \emph{EqualNeighbor} algorithm are given by
	$$ N_{i j}(t) = \left\{ \begin{array}{ll}
	                    \frac{1}{d_i (t)}  & \ \mbox{ if } \ j\in In_i (t)  \\
	                     0 & \ \mbox{ otherwise. } 
	                     \end{array}  \right.$$
If $\dG(t )$ is Eulerian, then the $i$-th entry of the Perron vector of the matrix $N(t)$ is equal to
	$$ \pi_i  (t) =  \frac{d_i(t)}{ |E(t)| } \, ,$$
	where $ |E(t)| = \sum_{i=1}^n d_i(t)$ is the number of edges in $\dG(t)$.
Hence in every execution of the EqualNeighbor algorithm with a   communication graph~$\dG$
	that is permanently Eulerian, the matrices~$N(t)$ share the same Perron vector if
	(a) every directed graph $\dG(t)$ is regular or (b) each node~$i$ has a constant degree~$d_i$.
In case (a), the EqualNeighbor and Metropolis algorithms coincide and Corollary~\ref{cor:Met} applies.
Thus we focus on case (b). 
	
The coefficient $a$ defined in (\ref{eq:aandalpha}) is equal to 
	$$ a = \frac{1}{\dmax} . $$ 
With Corollaries~\ref{cor:Kbcb} and~\ref{cor:KDS},  the bound in Corollary~\ref{cor:reversiblerate} simplifies into:
\begin{cor}\label{cor:EN}
Let $\dG$ be a  dynamic graph that is permanently  Eulerian and such that 
	each node~$i$ has a constant degree~$d_i$.
In any execution of the EqualNeighbor algorithm with the   communication graph~$\dG$ , 
	the convergence rate~$\varrho$  satisfies
	$$  \varrho \leq  1- \min \left ( \frac{2}{\dmax}, \max \left (
				\frac{1}{ \diamnorm \, | E| } , 
				 \frac{ | E | }{   \diam \, \dmax ^{\,2} \, \bottle }	                          \right )            \right ) $$ 
	where $\bottle = \bottle(\dG)$, $\diam= \diam(\dG)$, $\diamnorm = \diamnorm(\dG)$, 
	 $\dmax = \dmax (\dG)$,  and $|E| = \sum_{i \in [n]} d_i$.
\end{cor}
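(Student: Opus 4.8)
The plan is to specialize Corollary~\ref{cor:reversiblerate} to the stochastic matrices $N(t)$ of the EqualNeighbor algorithm in case~(b), i.e.\ when each node has a time-constant degree $d_i$. Since $\dG$ is permanently Eulerian with constant degrees, the number of edges $|E| = \sum_{i}d_i$ is constant in time, each $N(t)$ shares the common Perron vector $\pi$ with $\pi_i = d_i/|E|$, and (as recorded above) the matrices are reversible; hence A1--A4 hold and Corollary~\ref{cor:reversiblerate} applies, with $a = 1/\dmax$ already computed. It therefore remains only to exhibit uniform upper bounds $\kbcb$ and $\kDS$ on $\kbcb(N(t))$ and $\kDS(N(t))$, which I would obtain from the simplified geometric estimates behind Corollaries~\ref{cor:Kbcb} and~\ref{cor:KDS}.

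The key computation is that of the two quantities entering those estimates, namely $\alpha(N(t))$ and $\pi_{\max}$. For every edge $(i,j)$ of the graph associated to $N(t)$ one has $N_{ij}(t) = 1/d_i$, whence the cancellation
$$ \pi_i \, N_{ij}(t) = \frac{d_i}{|E|}\cdot\frac{1}{d_i} = \frac{1}{|E|} , $$
so that \emph{every} positive coefficient equals $1/|E|$ and $\alpha(N(t)) = 1/|E|$, independently of $t$. Likewise $\pi_{\max} = \dmax/|E|$.

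Next I would substitute these values, taking care that the graph associated to $N(t)$ is the \emph{reverse} of $\dG(t)$: reversing all edges reverses all paths while leaving their lengths, the geodesics and the edge multiplicities unchanged, so $\diam$, $\diamnorm$ and $\bottle$ are invariant under reversal, and each is dominated by its dynamic-graph supremum. The estimate $\kbcb(P)\le \diamnorm(G_{\!_P})/\alpha(P)$ preceding Corollary~\ref{cor:Kbcb} then gives $\kbcb(N(t)) \le \diamnorm\,|E|$, while the estimate $\kDS(P)\le (\pi_{\max})^2\,\diam(G_{\!_P})\,\bottle(G_{\!_P})/\alpha(P)$ underlying Corollary~\ref{cor:KDS} gives
$$ \kDS(N(t)) \le \frac{(\pi_{\max})^2\,\diam\,\bottle}{\alpha(N(t))} = \frac{\dmax^2\,\diam\,\bottle}{|E|} . $$
Taking $\kbcb = \diamnorm\,|E|$ and $\kDS = \dmax^2\,\diam\,\bottle/|E|$ as the uniform bounds and using $1/\min(\kbcb,\kDS) = \max(1/\kbcb,\,1/\kDS)$, Corollary~\ref{cor:reversiblerate} yields exactly the announced inequality.

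Since the argument is essentially a substitution into Corollary~\ref{cor:reversiblerate}, I do not expect a genuine obstacle; the only points demanding care are the cancellation $\pi_i N_{ij} = 1/|E|$, which is precisely what makes $\alpha$ constant and the final bound clean, and the passage to the reverse graph, where one must verify the reversal-invariance of $\diam$, $\diamnorm$ and $\bottle$ before replacing the parameters of $G_{N(t)}$ by the $\dG$-parameters appearing in the statement.
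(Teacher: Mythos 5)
Your proof is correct and takes essentially the same route as the paper: the paper likewise specializes Corollary~\ref{cor:reversiblerate} via the estimates behind Corollaries~\ref{cor:Kbcb} and~\ref{cor:KDS}, with $a = 1/\dmax$, $\alpha(N(t)) = 1/|E|$ and $\pi_{\max} = \dmax/|E|$. The only difference is that you spell out the cancellation $\pi_i N_{ij}(t) = 1/|E|$ and the reversal-invariance of $\diam$, $\diamnorm$ and $\bottle$, details the paper leaves implicit.
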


\paragraph{FixedWeight algorithm with an Eulerian topology.}
For each agent $i$, let $q_i$ denote an upper bound on the number of in-neighbors of~$i$
	in a given dynamic graph~$\dG $.
Weights in the FixedWeight algorithm are given by
	$$ W_{i j}(t) = \left\{ \begin{array}{ll}
	                     1/q_i  & \ \mbox{ if } \ j\in \In_i (t) \setminus \{i\} \\
	                     1- \left( d_i(t) -1\right)/q_i & \ \mbox{ if } \ j= i  \\
	                     0 & \ \mbox{ otherwise } \, .
	                     \end{array}  \right.$$
We easily check that if  $\dG(t) $ is Eulerian, then the $i$-th entry of the 
	$W(t)$'s Perron vector  is equal to 
	$$ \pi_i \big(W(t) \big) =  \frac{q_i }{Q} \, ,$$
	where $ Q = \sum_{i \in [n]} q_i$.
It  follows that with a communication  graph that is permanently  Eulerian, the Perron vector  is constant 
	and each matrix~$W(t)$ is reversible.
Furthermore, the quantities $a$ and $\alpha$ in~(\ref{eq:aandalpha}) satisfy
	$a \geq 1/ q $ and $\alpha = 1/Q$.
Using Corollaries~\ref{cor:Kbcb} and~\ref{cor:KDS},  
	 Corollary~\ref{cor:reversiblerate} specializes to the following corollary.
	
\begin{cor}
\label{cor:FW}
In any execution of the FixedWeight algorithm with a   communication graph~$\dG$ that is permanently  Eulerian,
	the convergence rate $  \varrho $ satisfies
	$$  \varrho \leq  1-   \min \left ( \frac{2}{q }, \max \left (
				\frac{1}{ \diamnorm \,  Q } , 
				 \frac{ Q }{  \diam \, q ^{ 2} \, \bottle }
	                                       \right ) \right ) $$ 
	where  $\bottle = \bottle(\dG)$, $\diam= \diam(\dG)$, $\diamnorm = \diamnorm(\dG)$, 
	$q  = \max_{i\in [n]} q_i $  and $Q = \sum_{i \in [n]} q_i $.
\end{cor}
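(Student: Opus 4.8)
The plan is to check that the FixedWeight algorithm on a permanently Eulerian topology falls under the hypotheses A1--A4 and then to specialize Corollary~\ref{cor:reversiblerate}, feeding it the two geometric estimates of Corollaries~\ref{cor:Kbcb} and~\ref{cor:KDS}.

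First I would verify the assumptions. A1 holds since an Eulerian graph is strongly connected and every $\dG(t)$ carries a self-loop at each node. For A2, note that $d_i(t)\le q_i$ forces the diagonal weight $W_{ii}(t)=1-(d_i(t)-1)/q_i\ge 1/q_i\ge 1/q$, while each positive off-diagonal weight is $1/q_i\ge 1/q$; hence all positive entries are bounded below by $1/q$. Assumption A3 is exactly the computation $\pi_i(W(t))=q_i/Q$, which does not depend on $t$, and A4 (reversibility) has already been recorded. Consequently Corollary~\ref{cor:reversiblerate} applies.

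Next I would read off the constants it requires. The diagonal estimate gives $a\ge 1/q$, hence $2a\ge 2/q$. For an off-diagonal edge $(i,j)$ one has $\pi_i(W(t))\,W_{ij}(t)=(q_i/Q)(1/q_i)=1/Q$, and the diagonal contributions $\pi_i(W(t))\,W_{ii}(t)=(q_i-d_i(t)+1)/Q$ are all $\ge 1/Q$; thus $\alpha(W(t))=1/Q$. Finally $\pi_{\max}=\max_i q_i/Q=q/Q$. Substituting into Corollary~\ref{cor:Kbcb} yields $\kbcb(W(t))\le \diamnorm(G_{W(t)})/\alpha(W(t))=Q\,\diamnorm(G_{W(t)})$, while Corollary~\ref{cor:KDS} yields $\kDS(W(t))\le (\pi_{\max})^2\,\diam(G_{W(t)})\,\bottle(G_{W(t)})/\alpha(W(t))=q^2\,\diam(G_{W(t)})\,\bottle(G_{W(t)})/Q$.

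Here lies the one point needing care: $G_{W(t)}$ is the \emph{reverse} of $\dG(t)$, whereas the statement is phrased in terms of $\dG$. I would observe that $\diamnorm$, $\diam$, and $\bottle$ are invariant under edge-reversal: reversing a family of edge-disjoint paths, or of geodesics, is a depth-preserving and edge-count-preserving bijection, and the $k$-distance satisfies $d_k^{G^{\mathrm{rev}}}(i,j)=d_k^{G}(j,i)$, so the maxima over all ordered pairs and over all edges coincide. Taking the supremum over $t$ then replaces these parameters by $\diamnorm$, $\diam$, $\bottle$, giving the uniform bounds $\kbcb\le Q\,\diamnorm$ and $\kDS\le q^2\,\diam\,\bottle/Q$. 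Hence $1/\min(\kbcb,\kDS)=\max\!\big(1/(Q\,\diamnorm),\,Q/(q^2\,\diam\,\bottle)\big)$, and plugging this together with $2a\ge 2/q$ into Corollary~\ref{cor:reversiblerate} (the replacement of $2a$ by $2/q$ only relaxes the bound, since $1-\min(2a,X)\le 1-\min(2/q,X)$) produces precisely the asserted inequality. Apart from the reversal-invariance remark, the argument is pure substitution, so I expect no real obstacle.
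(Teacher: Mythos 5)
Your proposal is correct and follows essentially the same route as the paper: verify that the Perron vector is the constant $q_i/Q$ and the matrices $W(t)$ are reversible, compute $a\ge 1/q$, $\alpha=1/Q$, $\pi_{\max}=q/Q$, and substitute into Corollary~\ref{cor:reversiblerate} via the estimates of Corollaries~\ref{cor:Kbcb} and~\ref{cor:KDS}. Your explicit check that $\diam$, $\diamnorm$, and $\bottle$ are invariant under edge reversal (since $G_{W(t)}$ is the reverse of $\dG(t)$) is a point the paper passes over silently, and it is a welcome addition rather than a deviation.
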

The quantities $1/\diamnorm  Q $ and $ Q/ \diam \, q ^{ 2} \, \bottle $ in the above bound 
	depend not only on the geometric parameters of~$\dG$, but also on the parameters~$q$ and~$Q$
	of the FixedWeight algorithm,
	and hence cannot be compared in general.
	
\subsection{Quadratic bounds on convergence rates}

Under the conditions specified in Corollaries~\ref{cor:Met} and~\ref{cor:EN}, the convergence rate 
	is bounded above by   $1-1/n^3$ for both the EqualNeighbor and the Metropolis algorithms.
We show that the original Poincar\'e's inequality in Proposition~\ref{prop:specgapDS} yields a convergence rate in $1- O(1/n^2)$
	for Metropolis, and prove that this bound also holds  for EqualNeighbor 
	when the   communication graph  is not too irregular.

First observe that the \emph{Metropolis-length} of any path $\gamma = (i_1, \dots, i_{\ell +1} )$ in $\dG(t)$ 
	of length $ \len {} = \ell$ is given by 
	$$ \len {M(t)} = n \sum_{k \in [\ell] } \max(d_{i_k} (t) ,d_{i_{k+1}} (t) ) ,$$
	while the \emph{EqualNeighbor-length} for a communication graph with constant degrees is
	$$ \len {N (t)}  = |E| \,  \len {} .$$
Our general quadratic bound for Metropolis is based on a simple combinatorial lemma inspired 
	by a nice idea in~\cite{IKOY03}.
	
\begin{lem}\label{lem:diamdeg}
Let $G$ be any bidirectional graph with $n$ nodes, and let  $i_1, \dots, i_{\ell +1}$ be 
	any geodesic in~$G$.
Then 
	$$ \max(d_{i_1},d_{i_2}) + \cdots + \max(d_{i_{\ell }},d_{i_{\ell +1}}) \leq 4n .$$ 
\end{lem}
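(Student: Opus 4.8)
The plan is to exploit the defining property of a geodesic---that any subpath is again a shortest path---in order to control how often a fixed vertex's neighborhood can meet the path, and then to charge each term $\max(d_{i_k},d_{i_{k+1}})$ to the \emph{closed neighborhood} of whichever endpoint realizes the maximum. Throughout I write $N(w)$ for the set of vertices at distance at most $1$ from $w$, so that $d_w = |N(w)|$ (the self-loop being counted), and I use that $G$ is bidirectional, so that $u\in N(w)\iff w\in N(u)$ and $\dist$ is symmetric.

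First I would record the geodesic property in the form I need: for any two indices $j,j'$ one has $\dist(i_j,i_{j'})=|j-j'|$, since a subpath of a geodesic is itself a geodesic. Next, for each edge index $k\in[\ell]$, I choose $v_k\in\{i_k,i_{k+1}\}$ with $d_{v_k}=\max(d_{i_k},d_{i_{k+1}})$, so that the quantity to be bounded equals $\sum_{k\in[\ell]}|N(v_k)|$. Swapping the order of summation,
	$$\sum_{k\in[\ell]}|N(v_k)|=\sum_{w\in[n]}\bigl|\{\,k\in[\ell]:v_k\in N(w)\,\}\bigr|,$$
	so it suffices to show that each fixed vertex $w$ is charged at most four times.

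The heart of the argument is to bound, for a fixed $w$, the index set $J_w=\{\,j:i_j\in N(w)\,\}$. If $j,j'\in J_w$, then $i_j$ and $i_{j'}$ both lie within distance one of $w$, so $\dist(i_j,i_{j'})\le 2$ by the triangle inequality; combined with the geodesic identity $\dist(i_j,i_{j'})=|j-j'|$ this yields $|j-j'|\le 2$. Hence $J_w$ is contained in a block of three consecutive integers $\{a,a+1,a+2\}$. Now $v_k\in N(w)$ forces $v_k=i_j$ for some $j\in J_w$, and since $v_k\in\{i_k,i_{k+1}\}$ this requires $k\in J_w\cup(J_w-1)\subseteq\{a-1,a,a+1,a+2\}$. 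Thus at most four edge indices charge $w$, and summing over the $n$ vertices gives $\sum_{k\in[\ell]}\max(d_{i_k},d_{i_{k+1}})\le 4n$.

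I expect the main obstacle to be pinning down the constant: the crude estimate $\max(d_{i_k},d_{i_{k+1}})\le d_{i_k}+d_{i_{k+1}}$ together with the mod-$3$ disjointness of neighborhoods only delivers $6n$, so obtaining the sharper value $4n$ genuinely relies on charging each maximum to a single arg-max vertex $v_k$ and on the observation that the four-element window $\{a-1,\dots,a+2\}$---rather than a six-element one---already captures every edge that can contribute to a given $w$. (The same bookkeeping works verbatim if $d_i$ denotes the open degree and $N(w)$ the open neighborhood, since the triangle-inequality step is unchanged.)
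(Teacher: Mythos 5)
Correct, and essentially the paper's own argument: you charge each term $\max(d_{i_k},d_{i_{k+1}})$ to the closed neighborhood of the endpoint realizing the maximum, and use the geodesic property (vertices whose neighborhoods meet lie within distance $2$, hence within $2$ in path index) to show that any fixed vertex is charged by at most four edge indices. The paper packages exactly the same facts as a pigeonhole statement---the neighborhoods $\mathcal{N}^*_k$ of the arg-max endpoints are pairwise disjoint once the indices $k$ differ by at least $4$---whereas you dualize this into a double count over vertices $w$; the difference is purely one of presentation.
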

\begin{proof}
Let ${\cal N}_k$ denote the set of (incoming or outgoing) neighbors of $i_k$, and for each $k \leq \ell $, let 
	$${\cal N}^*_k = \left \{ \begin{array}{ll} 
	                              {\cal N}_k. & \mbox{ if } d_k \geq d_{k+1} \\
	                              {\cal N}_{k +1} & \mbox{ otherwise.}
	                              \end{array} \right . $$
Since $i_1, \dots, i_{\ell +1}$ is a geodesic, ${\cal N}_k$ and $ {\cal N}_{k'} $ are disjoint if 
	$k' \geq k+3$.
Hence, ${\cal N}^*_k$ and $ {\cal N}^*_{k'} $ are disjoint if $k' \geq k+4$.
The lemma follows from the pigeonhole principle.                       
\end{proof} 

\begin{prop}\label{prop:quadraticmet}
The Metropolis algorithm with  dynamic  communication graphs that are 
	permanently bidirectional and connected achieves
	 asymptotic consensus with a convergence rate  
	$$  \varrho \leq 1- \frac{1}{4n^2}.$$
\end{prop}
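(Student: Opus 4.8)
The plan is to apply the Diaconis--Stroock geometric bound (Proposition~\ref{prop:specgapDS}) together with Corollary~\ref{cor:reversiblerate}, using geodesic paths and the combinatorial estimate of Lemma~\ref{lem:diamdeg}. Since the communication graph is permanently bidirectional, each Metropolis matrix $M(t)$ is symmetric with constant Perron vector $\pi = (1/n,\dots,1/n)^{\rm{T}}$, so assumptions A1--4 hold and Corollary~\ref{cor:reversiblerate} gives $\varrho \leq 1 - \min(2a, 1/\min(\kbcb,\kDS))$. First I would bound $\kDS(M(t))$ by choosing, for each ordered pair $i,j$, the path $\gamma_{i\,j}$ to be a geodesic in $\dG(t)$. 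The key quantity is the Metropolis-length $\len{M(t)} = n\sum_{k\in[\ell]}\max(d_{i_k}(t),d_{i_{k+1}}(t))$ of such a geodesic, and Lemma~\ref{lem:diamdeg} furnishes exactly the bound $\len{M(t)} \leq n \cdot 4n = 4n^2$ for any geodesic.

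The main work is to control $\kDS(M(t)) = \max_e \sum_{e\in\gamma_{i\,j}} \lenij{M(t)}\,\pi_i\pi_j$. With $\pi_i = \pi_j = 1/n$, each term is $\lenij{M(t)}/n^2 \leq 4n^2/n^2 = 4$, so I need to bound the number of geodesics in $\Gamma$ passing through a fixed edge $e$. Here I would invoke the bottleneck measure: by choosing $\Gamma$ to realize $\bottle(\dG(t))$, at most $\bottle$ geodesics cross any given edge. This would give $\kDS(M(t)) \leq 4\bottle$, which is not yet quadratic unless $\bottle$ is controlled. The cleaner route, which I expect the author takes, is to sum directly: $\sum_{e\in\gamma_{i\,j}}\lenij{M(t)}$ collapses because summing the per-edge contribution over a single geodesic just reconstructs $\lenij{M(t)} \leq 4n^2$ times the weight $\pi_i\pi_j = 1/n^2$, and then summing over which geodesics hit a fixed $e$. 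The subtle point is pairing the factor $n$ from the Metropolis length against the $1/n^2$ from $\pi_i\pi_j$ so that the edge-congestion count produces the clean constant $4$.

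The step I expect to be the main obstacle is extracting the constant $4n^2$ rather than a larger multiple of $n^2$: one must verify that the geodesic structure lets the edge-congestion sum telescope against the $\pi_i\pi_j$ weights without an extra factor of $n$ from the number of paths through an edge. Concretely, I would argue that for a fixed edge $e=(u,v)$, the paths $\gamma_{i\,j}$ traversing $e$ are indexed by endpoint pairs, and $\sum_{e\in\gamma_{i\,j}}\lenij{M(t)}\pi_i\pi_j \leq \max_\gamma \len{M(t)} \cdot \frac{1}{n^2}\cdot(\text{number of such pairs})$; using Lemma~\ref{lem:diamdeg} for the length and a direct count bounded by the edge-congestion for the pairs yields $\kDS(M(t)) \leq 4n^2$. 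Then Corollary~\ref{cor:reversiblerate} gives $\varrho \leq 1 - \min(2a, 1/\kDS)$, and since $a \geq 1/\dmax \geq 1/n$ forces $2a \geq 2/n \geq 1/(4n^2)$ for $n\geq 1$, the binding term is $1/\kDS \geq 1/4n^2$, delivering $\varrho \leq 1 - 1/4n^2$. I would also confirm the bound is uniform in $t$, so that the supremum over time steps in Corollary~\ref{cor:reversiblerate} does not degrade the estimate, which holds because Lemma~\ref{lem:diamdeg} applies to every bidirectional $\dG(t)$ with the same constant.
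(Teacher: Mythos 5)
Your proposal is correct and follows essentially the same route as the paper: Lemma~\ref{lem:diamdeg} gives $\len{M(t)} \leq 4n^2$ for geodesics, hence the Diaconis--Stroock quantity satisfies $\kDS(M(t)) \leq 4\,\bottle(\dG) \leq 4n^2$, and Corollary~\ref{cor:reversiblerate} together with $a \geq 1/n$ concludes. Your momentary worry that $\bottle$ must be ``controlled'' is unfounded: the trivial count you invoke at the end (at most $n(n-1)\leq n^2$ geodesics in total, hence through any fixed edge) is exactly the paper's bound $\bottle(\dG)\leq n^2$ from~(\ref{eq:bottleneck}).
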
	
\begin{proof}
Lemma~\ref{lem:diamdeg} shows that every geodesic $\gamma$ in $\dG(t)$ satisfies
	$$ \len {M(t)} \leq 4 n^2 .$$
Hence,  for every bound $\kDS(M(t))$ defined by (\ref{eq:kds}), it holds that 
	\begin{equation}\label{eq:poinbo}
	\kDS(M(t)) \leq 4 \, b(\dG) .
	\end{equation}
Since $ b(\dG) \leq n^2$, we obtain $\kDS \leq 4 n^2$.
The result  follows from  Corollary~\ref{cor:reversiblerate} and $a \geq 1/n$. 
\end{proof}

The same approach applies to the \emph{Lazy Metropolis} algorithm where weights are defined  by
$$ L_{i j}(t) = \left\{ \begin{array}{ll}
	                     \frac{1}{2 \max \left ( d_i (t)-1, \, d_j(t) - 1 \right)} & \ \mbox{ if } \ j\in  \In_i (t) \setminus \{i\} \\
	                     1- \sum_{ j\in {\cal N}_i (t) \setminus \{i\} } \frac{1}{2 \max \left ( d_i (t) -1, \, d_j(t) - 1 \right) }    & \ \mbox{ if } \ j= i  \\
	                     0 & \ \mbox{ otherwise .} 
	                     \end{array}  \right.$$
Therefore,
	$$ \forall i\in[n],\ \forall t\geq 1, \   \ L_{i i} (t) \geq \frac{1}{2} $$
	and
	$$ \len {L (t)} = 2 n \sum_{k \in [\ell] } \max(d_{i_k} (t) ,d_{i_{k+1}} (t) ) .$$
 Corollary~\ref{cor:reversiblerate} and Lemma~\ref{lem:diamdeg} give the following result for 
	the Lazy Metropolis algorithm.

\begin{prop}\label{prop:quadraticlazymet}
The Lazy Metropolis algorithm with  dynamic  communication graphs that are 
	permanently bidirectional and connected achieves
	 asymptotic consensus with a convergence rate  
	$$  \varrho \leq 1- \frac{1}{8n^2}.$$
\end{prop}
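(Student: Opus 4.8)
The plan is to mirror the proof of Proposition~\ref{prop:quadraticmet} almost verbatim, substituting the Lazy-Metropolis data that the excerpt has already laid out. First I would invoke Lemma~\ref{lem:diamdeg}: for any geodesic $\gamma = (i_1,\dots,i_{\ell+1})$ in a bidirectional graph $\dG(t)$ we have $\max(d_{i_1},d_{i_2}) + \cdots + \max(d_{i_\ell},d_{i_{\ell+1}}) \leq 4n$. Plugging this into the displayed Lazy-Metropolis length formula $\len{L(t)} = 2n \sum_{k\in[\ell]} \max(d_{i_k}(t),d_{i_{k+1}}(t))$ gives immediately that every geodesic satisfies $\len{L(t)} \leq 2n\cdot 4n = 8n^2$.

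Next I would feed this into the Diaconis--Stroock bound. Choosing in the definition~(\ref{eq:kds}) of $\kDS(L(t))$ the path family $\Gamma$ consisting of geodesics, each weight $\lenij{L(t)}$ is at most $8n^2$, so that
$$ \kDS(L(t)) \leq 8n^2 \max_e \sum_{e \in \gamma_{ij}} \pi_i\pi_j = 8\,n^2\, b(\dG(t)) \cdot (\pi_{\max})^2, $$
but since here $\pi$ is the uniform Perron vector with $\pi_i = 1/n$ for the bidirectional case (as recorded for Metropolis), the factor $n^2(\pi_{\max})^2$ telescopes and one obtains $\kDS(L(t)) \leq 8\,b(\dG)$ exactly as in equation~(\ref{eq:poinbo}). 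Using $b(\dG) \leq n^2$ from~(\ref{eq:bottleneck}), this yields a uniform bound $\kDS \leq 8n^2$.

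Finally I would apply Corollary~\ref{cor:reversiblerate}. The excerpt has already observed that $L_{ii}(t) \geq 1/2$ for all $i,t$, so the coefficient $a$ of~(\ref{eq:aandalpha}) satisfies $a \geq 1/2$, whence $2a \geq 1$. The corollary then gives
$$ \varrho \leq 1 - \min\!\left(2a,\ \frac{1}{\min(\kbcb,\kDS)}\right) \leq 1 - \min\!\left(1,\ \frac{1}{8n^2}\right) = 1 - \frac{1}{8n^2}, $$
which is the claimed bound. The lazy variant's whole point is that the factor-of-two slowdown in the weights (reflected in the extra $2$ in $\len{L(t)}$) is exactly compensated by the large self-loop weight $a \geq 1/2$, so that the $2a$ branch of the minimum no longer binds --- unlike plain Metropolis, where $a \geq 1/n$ could in principle compete with the $1/\kDS$ term.

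I do not anticipate a genuine obstacle, since all the heavy combinatorial work is done by Lemma~\ref{lem:diamdeg} and the structural facts are already in place. The only point requiring a little care is the bookkeeping of the $(\pi_{\max})^2$ factor: one must confirm that the uniform Perron vector makes the passage from $\kDS(L(t)) \leq 8n^2 b(\dG)(\pi_{\max})^2$ to $\kDS \leq 8\,b(\dG)$ legitimate, exactly paralleling how Proposition~\ref{prop:quadraticmet} arrives at~(\ref{eq:poinbo}). Once that is noted, the proof is a three-line specialization.
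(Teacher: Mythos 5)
Your proposal is correct and follows essentially the same route as the paper: the paper's own (sketched) argument records $L_{ii}(t)\geq 1/2$ and the Lazy-Metropolis path length, then invokes Lemma~\ref{lem:diamdeg} and Corollary~\ref{cor:reversiblerate} exactly as you do, mirroring the proof of Proposition~\ref{prop:quadraticmet}. Your handling of the $(\pi_{\max})^2$ factor via the uniform Perron vector is precisely how the paper passes to the analogue of~(\ref{eq:poinbo}), so no gap remains.
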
	
From the quadratic bound on the hitting times of Metropolis walks proved by Nonaka et al.~\cite{NOSY10},
	Olshevsky~\cite{Ols17} showed that the convergence rate of the Lazy Metropolis algorithm on any
	fixed graph that is connected and bidirectional is bounded from the above by $-1/71 n^{2} $.
Proposition~\ref{prop:quadraticlazymet}   improves  this result  and 
	extends it to the case of  a time-varying topology.

The Metropolis and EqualNeighbor algorithms coincide in the case of communication  graphs that are permanently regular.
Proposition~\ref{prop:quadraticmet} shows that the convergence rate is bounded above by $1- 1/4n^2$ for such 
	topologies, thus extending the quadratic upper bound in~\cite{DS91} for distance transitive graphs to any
	regular graphs.
With moderate irregularity~\cite{Alb97}, a close
	method for bounding $\kDS$ in the EqualNeighbor algorithm gives the following quadratic bound.

\begin{prop}\label{prop:quadraticEN}
In any execution of the EqualNeighbor algorithm with a    communication graph~$\dG$ 
	that is permanently  Eulerian and with  a constant degree~$d_i$  at each node~$i$,  
	 asymptotic consensus is achieved with a convergence rate  
	$$  \varrho \leq 1- \frac{1}{(3 + \dmax - \dmin) \, n^2 }$$
	where $\dmin $ and  $ \dmax$ denote the minimum and maximum degree in each graph $\dG(t)$. 
	\end{prop}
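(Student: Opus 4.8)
The plan is to proceed exactly as in the proof of Proposition~\ref{prop:quadraticmet}, reducing everything to a uniform upper bound on $\kDS(N(t))$ and then invoking Corollary~\ref{cor:reversiblerate}. Since the communication graph is permanently Eulerian with a time-constant degree $d_i$ at each node, the matrices $N(t)$ are reversible with the common Perron vector $\pi_i = d_i/|E|$, so Corollary~\ref{cor:reversiblerate} applies. The coefficient $a = 1/\dmax$ makes the term $2a = 2/\dmax$ comfortably larger than the target $1/\big((3+\dmax-\dmin)n^2\big)$, so the active constraint is $1/\min(\kbcb,\kDS)$, and it suffices to prove $\kDS \le (3+\dmax-\dmin)\,n^2$.

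First I would choose, for every ordered pair $(i,j)$, a geodesic $\gamma_{ij}$ of $\dG(t)$, so that each edge is traversed by at most $\bottle(\dG) \le n^2$ of these paths. Using the EqualNeighbor-length $\len{N(t)} = |E|\,\len{}$ together with $\pi_i\pi_j = d_id_j/|E|^2$, the quantity~(\ref{eq:kds}) becomes
$$ \kDS(N(t)) = \max_{e} \ \frac{1}{|E|} \sum_{e\in\gamma_{ij}} \lenij{} \, d_i \, d_j , $$
the maximum being over the edges of $\dG(t)$ and the sum over all chosen geodesics traversing $e$. For a regular graph this reproduces the situation of Proposition~\ref{prop:quadraticmet}, since EqualNeighbor and Metropolis then coincide.

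The heart of the argument is a combinatorial lemma in the spirit of Lemma~\ref{lem:diamdeg}, but tailored to the EqualNeighbor-length. The key geometric fact is that along a geodesic the neighborhoods of two nodes at graph-distance at least three are disjoint; grouping the nodes of a geodesic by their index modulo three and applying the pigeonhole principle yields $\sum_k d_{i_k} \le 3n$, which is the origin of the constant~$3$ and is already sharper than the factor~$4$ produced by the shifted neighborhoods $\mathcal N_k^*$ used for Metropolis. This controls the geodesic lengths and the ``regular part'' of the sum. The remaining, and genuinely delicate, point is to absorb the endpoint weights $d_id_j$ and the edge congestion into an \emph{additive} correction $\dmax-\dmin$ rather than a multiplicative factor of order $(\dmax/\dmin)^2$: following the moderate-irregularity technique of~\cite{Alb97}, one writes each degree as $\dmin$ plus a deviation bounded by $\dmax-\dmin$ and tracks these deviations jointly against $|E|$ and against the congestion bound $\bottle(\dG)\le n^2$.

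I expect this last step to be the main obstacle. Bounding the three ingredients of $\sum_{e\in\gamma_{ij}}\lenij{}\,d_id_j$ — the lengths $\lenij{}$, the endpoint degrees $d_id_j$, and the number of geodesics through a fixed edge — \emph{separately} produces only a multiplicative dependence of the form $(\dmax/\dmin)^2$, which is worthless in the near-regular regime the statement targets; one really has to exploit that geodesics with two high-degree endpoints are correspondingly rare, so that the per-geodesic excess and the edge congestion cannot both be large at once. Once $\kDS \le (3+\dmax-\dmin)\,n^2$ is established, Corollary~\ref{cor:reversiblerate} gives $\varrho \le 1 - 1/\big((3+\dmax-\dmin)n^2\big)$, completing the proof.
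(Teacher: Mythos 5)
Your setup coincides with the paper's: the same reduction via Corollary~\ref{cor:reversiblerate} (with $2a = 2/\dmax$ dominating), the same expression $\kDS(N(t)) = \frac{1}{|E|}\max_e \sum_{e \in \gamma_{i\,j}} \lenij{}\, d_i\, d_j$, and the same mod-3 pigeonhole refinement of Lemma~\ref{lem:diamdeg} giving $\sum_k d_{u_k} \le 3n$ along any geodesic. But the step you yourself flag as ``the main obstacle'' is precisely where the proposal stops being a proof: you never establish $\kDS \le (3+\dmax-\dmin)\,n^2$. What you offer instead is a hoped-for strategy --- tracking degree deviations jointly against the congestion bound $\bottle(\dG)\le n^2$ and arguing that geodesics with two high-degree endpoints are ``rare'' --- and no argument of that shape is carried out. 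Moreover this is not a route the paper needs or uses; the bottleneck measure plays no role in its proof, and as you correctly diagnose, any estimate that keeps the congestion bound as a separate multiplicative ingredient is doomed to produce $(\dmax/\dmin)^2$-type factors.

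The paper's actual resolution is simpler. First, bound the maximum over edges by the sum over \emph{all} ordered pairs: since each pair contributes at most one geodesic through any given edge and all terms are nonnegative,
$$ \kDS(N(t)) \;\le\; \frac{1}{|E|} \sum_{i \neq j} \lenij{}\, d_i \, d_j \, ,$$
so the constraint ``through $e$'' --- and with it all congestion considerations --- is simply discarded. Second, one endpoint degree is absorbed by normalization: writing the right-hand side as $\sum_j \frac{d_j}{|E|} \sum_{i \neq j} \lenij{}\, d_i$ and using $|E| = \sum_k d_k$, it is at most $\max_{j} \sum_{i \neq j} \lenij{}\, d_i$. Third, the $3n$ lemma gives $\lenij{}\,\dmin \le 3n$, hence $\lenij{}\, d_i \le 3n + (d_i - \dmin)\lenij{} \le 3n + (\dmax-\dmin)(n-1)$, and summing over the $n-1$ indices $i$ yields $\kDS \le (3+\dmax-\dmin)\,n^2$. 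In other words, the additive correction $\dmax - \dmin$ does not come from any delicate joint accounting of excess degrees against edge congestion; it comes from letting the Perron weights $d_j/|E|$ eat one degree factor and splitting the other as $\dmin$ plus a deviation. That is the idea missing from your proposal.
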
	
\begin{proof}
The EqualNeighbor-length of any path in the directed graph~$\dG(t)$ gives
	$$ \kDS(A(t)) =  \frac{1}{| E |} \, \max_{ e }\,  \sum_{ e \in \gamma_{i \, j} }  \lenij {}  d_i \,  d_j   .$$
Hence 
	$$ \kDS(A(t)) \leq \frac{1}{| E |}  \sum_{ i \neq j }  \lenij {}  d_i \,  d_j 
			\leq  \max_{ j \in [n]  }   \sum_{ i \in [n]\setminus  j }  \lenij {}  d_i   .$$
The second inequality is due to the fact that $|E | = \sum_{ k  \in [n]  } d_k $.
An  argument analog to Lemma~\ref{lem:diamdeg} shows that the sum of the degrees along any geodesic is 
	less than $3n$, and thus  each term in the above sum is bounded above by 
	$$ \lenij {}  d_i (t) \leq 3 n + (d_i - \dmin) \lenij {} .$$
The result immediately follows from  Corollary~\ref{cor:reversiblerate} and $a \geq 1/\dmax$. 
\end{proof}

The example of the  \emph{barbell} graph developed by Landau and Odlyzko~\cite{LO81} 
	 shows that the convergence rate of the EqualNeighbor algorithm
	is  greater than $1- 32/n^3$ with a specific set of initial values (see also below).
Thus the general quadratic bound for Metropolis in Proposition~\ref{prop:quadraticmet}  does not hold for EqualNeighbor
	because of degree fluctuations in space.
In the light of this example and of the striking result by Olshevsky and Tsitsiklis~\cite{OT13}, demonstrating
	that the EqualNeighbor algorithm may experience an exponential convergence rate with degree fluctuations in time,
	the Metropolis algorithm appears as a powerful and efficient method  for masking  graph irregularities,
	requiring only bidirectional communication links and limited knowledge at each agent.
	
\section{Bounds for specific communication graphs}\label{sec:examples}

We now examine some typical examples where the bounds presented above 
	are easy to compute.
For the FixedWeight algorithm, we just give the bound derived from the simple geometric bound $\bcb$,
	while we present detailed comparisons of the various bounds for the EqualNeighbor and Metropolis algorithms
	(cf. Figure~\ref{fig:examples}).  
For  Metropolis and FixedWeight, the communication graph is time-varying, but it is supposed to
	belong to one of the listed classes of directed graphs.
In other words, the support is fixed but  node labelling may change over time.
For the EqualNeighbor algorithm, the communication graph is supposed to be fixed  if the directed 
	graphs in the class under consideration are not regular.
This section is completed with the case of the EqualNeighbor algorithm and the fixed \emph{Butterfly} graph,
	which allows us to compare the various methods for bounding convergence rate in the case of  non-reversible
	stochastic matrices.
	
\paragraph{Ring.}
Let $G=(V,E)$ be a bidirectional ring\footnote{%
	For a chain,  graph parameters are of the same order and so leads to bounds of the same order of magnitude.}
	 with an odd number $n = 2m+1$ of nodes.
Here $|E| = 3 n $, $\dmax(G) = 3$, and $\diam(G) = m$.
We easily check that $\diamnorm(G) = m$ and $\bottle(G) = m(m+1)/2$.	
	
Since $G$ is regular, the EqualNeighbor and Metropolis algorithms coincide, and we obtain
	$$ \bcb = 1- \frac{2}{3n^2} + O \left( \frac{1}{n^3} \right), \ 
	    \DS =  1- \frac{16}{3n^2} + O \left( \frac{1}{n^3} \right) . $$
	
The two bounds are of the same order of magnitude with $ \DS < \bcb $.
Corollary~\ref{cor:EN} gives a convergence rate
	$$ \varrho \leq  1- \frac{16}{3n^2}, $$
	which is the right order for~$n$ large
\paragraph{Hypercube.}  
Let $G=(V,E)$ be the $p$-dimensional cube with $n = 2^p$ nodes.
Here $|E| = (p+1) 2^p $, $d(G) = p +1 $, $\diam(G) = p $, and $\diamnorm(G) = 1 $.
Diaconis and Stroock~\cite{DS91} showed that  $\bottle(G) = 2^{p-1}$.
The EqualNeighbor and Metropolis algorithms coincide, and we obtain
	$$ \bcb = 1- \frac{1}{(p+1) 2^p}  \  \mbox { and } \  
	    \DS =  1- \frac{2}{p \, (p+1)} . $$
The  bound $\DS$ is far better than $   \bcb $.
Corollary~\ref{cor:EN} gives a convergence rate  
	$$ \varrho \leq  1- \frac{2}{p \, (p+1)} \leq  1- \frac{1}{(\log_2 n)^2}, $$
	 which is the right order for~$n$ large (e.g., see~\cite{Dia88}).
	
\paragraph{Star.}
The star graph with $n$ nodes has $3n-2$ edges.
The maximum degree is $n$, its diameter is~2,  its edge-connectivity is 1, and so its normalized diameter is~$2$.
The bottleneck measure is equal to the lower bound in (\ref{eq:bottleneck}), namely~$n-1$.
 
 For the EqualNeighbor algorithm, we obtain
	$$ \bcb = 1- \frac{1}{ 6n } + O \left( \frac{1}{n^2} \right)  \  \mbox { and } \  
	    \DS =  1- \frac{3}{2n^2} + O \left( \frac{1}{n^3} \right) . $$
The bound $\bcb$ is far better then $ \DS  $, and Corollary~\ref{cor:EN} gives a convergence rate
	$$ \varrho \leq  1- \frac{1}{6 n}. $$

As for Metropolis, we have 
	$$ \bcb = 1- \frac{1}{2n^2}  \  \mbox { and } \  
	    \DS =  1- \frac{1}{3(n-1)}  . $$
The bound  $ \DS$  is  asymptotically  better than~$\bcb$ and improves the bound given in~\cite{NOR18}.
Observe that the inequality~(\ref{eq:poinbo}) in the proof of Proposition~\ref{prop:quadraticmet} directly 
	gives $ \varrho \leq  1- 1/4n $.

\paragraph{Two-star.}
A two-star graph $G$ is composed of two identical stars with an edge connecting their centers.
It has an even number $n$ of nodes and  $3n-2$ edges.
Here, $ \dmax (G)= 1 + n/2$, $\diam(G)= \diamnorm(G) =3$, and~$\bottle(G) = n^2/4$.
 
 For the Metropolis  algorithm, we obtain
		$$ \bcb = 1- \frac{2}{ 3 n^2 } + O \left( \frac{1}{n^3} \right)  \  \mbox { and } \  
	    \DS =  1- \frac{8}{3 n^2} + O \left( \frac{1}{n^3} \right) . $$ 
The bounds  $ \bcb$ and $ \DS  $ are of the same order with  $ \DS < \bcb $.
Corollary~\ref{cor:Met} gives a convergence rate 
	$$ \varrho \leq  1- \frac{8}{3 n^2}. $$
	
As for EqualNeighbor, we have 
	$$ \bcb = 1- \frac{1}{ 9 n } + O \left( \frac{1}{n^2} \right)  \  \mbox { and } \  
	    \DS =  1- \frac{16}{n^3} + O \left( \frac{1}{n^4} \right) . $$
The bound $\bcb$ is far better than $\DS$, and Corollary~\ref{cor:EN} gives a convergence rate
	$$ \varrho \leq  1- \frac{1}{9n }. $$

\paragraph{Binary tree.}
Consider the full binary tree of depth $p>1$.
It has $n=2^{p +1} -1$ nodes, $3n -2$ edges, and the maximum degree is 4.
The results for the EqualNeighbor and Metropolis algorithms are thus of the same order.
The diameter is $2\, p$ and the normalized diameter is $2\, p$.
We easily check that the bottleneck measure is $2^{p} (2^{p} - 1)$.

For Metropolis, we have
	$$ \bcb = 1- \frac{1}{ 8 \, p (2^{p +1} - 1)}   \  \mbox { and } \  
	    \DS =  1- \frac{ 2^{p +1} - 1 }{ p \, 2^{p +3}  (2^{p } - 1) } . $$ 
The bounds  $ \bcb$ and $ \DS  $ are of the same order with $ \DS < \bcb$.
Corollary~\ref{cor:Met} gives a convergence rate 
	$$ \varrho \leq  1- \frac{1}{2 n \log_2 n }. $$
The results for EqualNeighbor are similar with a convergence rate 
	$$ \varrho \leq  1- \frac{1}{4 n \log_2 n }. $$

Observe that, in the case of a general bidirectional tree, the number of edges remains equal to~$3n -2$ 
	while the diameter may be $n-1$, which leads to 
	$$  \bcb \leq 1- \frac{1}{ 3 \, n^2} $$ 
	for the EqualNeighbor algorithm.
Proposition~\ref{prop:quadraticmet} shows that a quadratic bound also holds for Metropolis.

\paragraph{Two-dimensional grid.}
Let $p$ be an even positive integer, and let $G=(V,E)$ be the two-dimensional grid  with $n = p^2$ nodes.
Here $|E| = p (5p - 4) $, $d(G) = 5 $, $\diam(G) = 2 (p - 1) $,  and $\diamnorm(G) = p - 1 $.
The results for  EqualNeighbor and Metropolis are thus of the same order.
Choosing paths $\gamma_{i,j}$ first with vertical edges and then with horizontal edges yields 
	$\bottle(G) \leq  p^3 (p + 1)/8 $.
	
For the Metropolis algorithm, we obtain
	$$ \bcb = 1 - \frac{1}{ 5 n^{3/2} }   \  \mbox { and } \  
	    \DS \leq  1- \frac{4}{5 p (p-1)(p+1)} \leq 1 - \frac{2}{ 5 n^{3/2} } . $$ 
The bounds $\bcb$ and $\DS$ are of the same order of magnitude.
Corollary~\ref{cor:Met} gives a convergence rate  
	$$ \varrho \leq  1- \frac{2}{ 5 n^{3/2} }  . $$
Similarly, 	Corollary~\ref{cor:EN} implies that the convergence rate of the EqualNeighbor algorithm satisfies 
	$$ \varrho \leq  1- \frac{4}{ 5 n^{3/2} }  . $$

\paragraph{Barbell.} 
The \emph{barbell graph} $G=( V,E)$ of size $|V| = n = 4p -1$ is composed of two cliques $C$ and $\bar{C}$ 
	with $p$ nodes each,
	that are connected by  a line  of length $2p -1$; see Figure~\ref{fig:barbell}.
The barbell graph is  bidirectional with $|E|= 2p^2 + 6 p -1$ edges. 
The maximum degree is $p+1$.
The diameter and the normalized diameter are equal to $2(p+1)$.
Any geodesic connecting~$i$ to~$j$ with $i \leq 0$ and $j \geq 1$  crosses over the edge $(0,1)$,
	which is thus traversed by $2p(2p -1)$ geodesics.
Clearly $(0,1)$   realizes the maximum in (\ref{eq:bdef}), 
	and hence $\bottle(G) = 2p(2p -1))$.
	
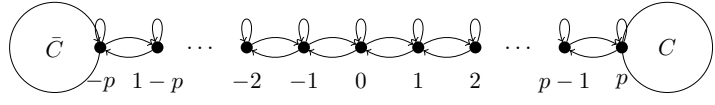
\begin{figure}
\centering
\scalebox{0.8}{
\begin{tikzpicture}[
        dot/.style = {circle, fill, minimum size=0.2cm,
                inner sep=0pt, outer sep=0pt},
        dot/.default = 6pt,  
        node_arrow/.style = {->},
        node_text/.style = {}
    ]

    \def \circlewidth {1.5cm}
    \def \nodespacing {0.75cm}
    \def \textspacing {0.2cm}

    \node[draw, circle, minimum size=\circlewidth] (c_bar) {$\bar{C}$} ;

    \node[dot] at (c_bar.east) (_p) {};

    \node[dot, right = \nodespacing of _p ] (1_p) {};

    \node[right = \nodespacing/3 of 1_p ] (etc1) {\dots};

    \node[dot, right = \nodespacing/3 of etc1] (_2) {};
    \node[dot, right = \nodespacing of _2] (_1) {};
    \node[dot, right = \nodespacing of _1] (0) {};
    \node[dot, right = \nodespacing of 0] (1) {};
    \node[dot, right = \nodespacing of 1] (2) {};

    \node[right = \nodespacing/3 of 2 ] (etc2) {\dots};

    \node[dot, right = \nodespacing/3 of etc2] (p_1) {};
    \node[dot, right = \nodespacing of p_1] (p) {};

    \node[draw, circle, minimum size=\circlewidth, right = 0cm of p.center] (c) {$C$} ;

    \node[node_text, below = \textspacing of _p] {$-p$};
    \node[node_text, below = \textspacing of 1_p] {$1-p$};
    \node[node_text, below = \textspacing of _2] {$-2$};
    \node[node_text, below = \textspacing of _1] {$-1$};
    \node[node_text, below = \textspacing of 0] {$0$};
    \node[node_text, below = \textspacing of 1] {$1$};
    \node[node_text, below = \textspacing of 2] {$2$};
    \node[node_text, below = \textspacing of p_1] {$p-1$};
    \node[node_text, below = \textspacing of p] {$p$};

    \drawloop{_p}
    \drawloop{1_p}
    \drawloop{_2}
    \drawloop{_1}
    \drawloop{0}
    \drawloop{1}
    \drawloop{2}
    \drawloop{p_1}
    \drawloop{p}

    \drawarrows{_p}{1_p}
    \drawarrows{_2}{_1}
    \drawarrows{_1}{0}
    \drawarrows{0}{1}
    \drawarrows{1}{2}
    \drawarrows{p_1}{p}

\end{tikzpicture}
}
\caption{The barbell graph}\vspace{-0.0cm}
\label{fig:barbell}
\end{figure}

For the  Metropolis  algorithm, the bounds $\bcb$ and $\DS$ are of the order of magnitude 
	with  $\DS < \bcb$, and $\DS$ is of the order of $1- 32/ p^3 $.
A better estimate on the convergence rate is obtained with~(\ref{eq:poinbo}) and gives
	$$ \varrho \leq 1- \frac{1}{16 p^2} = 1- \frac{1}{(n+1)^2}. $$ 
The barbell graph thus exemplifies  that the bound in Corollary~\ref{cor:Met}  can be far from the original bound 
	$1 - 1/\kDS$.
	
As for EqualNeighbor, the expression of $\kDS$ in~(\ref{eq:kds}) makes the barbell graph as 
	a good candidate for a spectral gap that is cubic in $1/n$.
Indeed, Landau and Odlyzko~\cite{LO81} consider the vector $v \in \IR^n$ defined by 
	$$ v_i = \left\{ 
		\begin{array}{ll}
		-p & \mbox{ if } \  i \in \bar{C} \\
		i & \mbox{ if } \ 1-p \leq i \leq p-1 \\
		p & \mbox{ if } \  i \in C . \\
		\end{array}
	\right. $$
Let $N$ denote the stochastic matrix~associated to the EqualNeighbor algorithm running on the barbell graph.
Proposition~\ref{pro:greengen} shows that
	$$ \Q_N(v) = \frac{1}{2|E|} \sum_{(i,j) \in E} (v_i - v_j)^2  \ \mbox{ and } \ 
	   \lVert v  \lVert_{\pi}^2  =  \frac{1}{|E|}  \sum_{i \in[n]} d_i  v_i ^2  . $$
Hence $$ \Q_N( v ) = \frac{p}{|E|} \ \mbox{ and } \ 
	 \lVert  v  \lVert_{\pi}^2  =  \frac{2}{|E|}  \left( p^4 + \frac{4 p^3}{3} + \frac{p^2}{2}  + \frac{p}{6}  \right) .$$
Therefore
	$$\lambda_2(N) \geq 1- \frac{3}{6p^3 + 8 p^2 + 3 p + 6} \geq 1 - \frac{32}{n^3}.$$
The first inequality is Lemma~\ref{lem:gamma}  and the second one is because $n= 4p-1$.
In the execution with the initial values corresponding to 
	one eigenvector associated  to $\lambda_2(N)$, the convergence rate satisfies
	$$ \varrho = \lambda_2 (N) 
				\geq 1 - \frac{32}{n^3} .$$
Hence, as opposed to the Metropolis algorithm, no general quadratic bound holds for the convergence rate of 
	EqualNeighbor on a fixed connected bidirectional graph.

\paragraph{Butterfly (and EqualNeighbor).}
The \emph{Butterfly graph} has $ n = 2m $ nodes and consists of two isomorphic parts that are connected 
	by a bidirectional edge.
We list the edges between the nodes $1,2 , \dots ,m $ which also determine the edges between the nodes 
	$m+1,m+2, \dots ,2m$ via the isomorphism $\overline{i} = n - i +1$.
The edges between the nodes $1,2, \dots ,m$ are: (a) the edges   $(i+1,i)$  for every $i \in [m-1]$,
	and (b) the edges  $(1,i)$   for every $i \in [m]$. 
In addition, it contains a self-loop at each node and the two edges $(m,\overline{m})$ and $(\overline{m}, m)$. 
Hence, the butterfly graph is not bidirectional but it is strongly connected; see Figure~\ref{fig:butterfly}.

\begin{figure}
\centering
\scalebox{0.8}{
\begin{tikzpicture}[>=latex',tight background]
\node[draw, circle] (n1) at (205.71:2.5) {$1$};
\node[draw, circle] (n2) at (257.14:2.5) {$2$};
\node[draw, circle] (n3) at (308.57:2.5) {$3$};
\node[draw, circle] (n4) at      (0:2.5) {$4$};
\node               (n5) at  (51.43:2.5) {$\dots$};
\node               (n6) at (102.86:2.5) {$\dots$};
\node[draw, circle] (n7) at (154.29:2.5) {$m$};

\draw[->] (n1) -- (n2);
\draw[<-] (n2) -- (n3);
\draw[<-] (n3) -- (n4);
\draw[<-] (n4) -- (n5);
\draw[<-] (n5) -- (n6);
\draw[<-] (n6) -- (n7);
\draw[<-] (n7) -- (n1);

\draw[->] (n2) .. controls +(-1.2,0.3) .. (n1);
\draw[<-] (n3) -- (n1);
\draw[<-] (n4) -- (n1);
\draw[<-] (n5) -- (n1);
\draw[<-] (n6) -- (n1);

\begin{scope}[yscale=1,xscale=-1,shift={( 7,0)}]
\node[draw, circle] (m1) at (205.71:2.5) {$\bar1$};
\node[draw, circle] (m2) at (257.14:2.5) {$\bar2$};
\node[draw, circle] (m3) at (308.57:2.5) {$\bar3$};
\node[draw, circle] (m4) at      (0:2.5) {$\bar4$};
\node               (m5) at  (51.43:2.5) {$\dots$};
\node               (m6) at (102.86:2.5) {$\dots$};
\node[draw, circle] (m7) at (154.29:2.5) {$\bar m$};

\draw[->] (m1) -- (m2);
\draw[<-] (m2) -- (m3);
\draw[<-] (m3) -- (m4);
\draw[<-] (m4) -- (m5);
\draw[<-] (m5) -- (m6);
\draw[<-] (m6) -- (m7);
\draw[<-] (m7) -- (m1);

\draw[->] (m2) .. controls +(-1.2,0.3) .. (m1);
\draw[<-] (m3) -- (m1);
\draw[<-] (m4) -- (m1);
\draw[<-] (m5) -- (m1);
\draw[<-] (m6) -- (m1);

\end{scope}

\draw[->] (m1) edge [in=30,out=60,min distance=16] node {} (m1);
\draw[->] (m2) edge [in=75,out=105,min distance=16] node {} (m2);
\draw[->] (m3) edge [in=75,out=105,min distance=16] node {} (m3);
\draw[->] (m4) edge [in=85,out=115,min distance=16] node {} (m4);
\draw[->] (m7) edge [in=75,out=105,min distance=16] node {} (m7);

\draw[->] (n1) edge [in=120,out=150,min distance=16] node {} (n1);
\draw[->] (n2) edge [in=75,out=105,min distance=16] node {} (n2);
\draw[->] (n3) edge [in=75,out=105,min distance=16] node {} (n3);
\draw[->] (n4) edge [in=65,out=95,min distance=16] node {} (n4);
\draw[->] (n7) edge [in=75,out=105,min distance=16] node {} (n7);

\draw[->] (n7) .. controls +(-1.3, 0.4) .. (m7);
\draw[<-] (n7) .. controls +(-1.3,-0.4) .. (m7);
\end{tikzpicture}
}
\caption{The butterfly graph}\vspace{-0.0cm}
\label{fig:butterfly}
\end{figure}
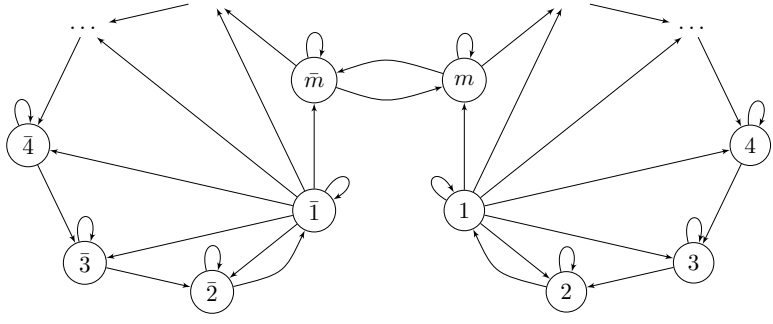

We now consider the EqualNeighbor algorithm running on this fixed graph, yielding a fixed stochastic matrix~$B$
	that is not reversible.
 Corollary~\ref{cor:reversiblerate} is not applicable, but the results 
	in Section~\ref{sec:singular} give a convergence rate
	$$ \varrho \leq 1 -  \max \left ( \frac{\alpha(B)}{n-1} \, , \,  \frac{1}{ \kbcb( B^{\dag}  B ) } \, ,
	                              \,  \frac{1}{ \kDS(B^{\dag}  B)  }  \right )    ,$$
	 where $\alpha(B)$, $\kbcb(  B^{\dag}   B )$, and $\kbcb(  B^{\dag}   B )$ are defined by~(\ref{eq:alpha}), 
	 (\ref{eq:kbcb}), and~(\ref{eq:kds}), respectively.
	 
We easily verify that the Perron vector of $ B$, and thus of $ B^{\dag}   B $,  is given by
	$$\pi_1 = \frac{1}{5}, \ \pi_i =  \frac{3}{5\, . \, 2^i} \  \mbox{ for $i \in \{2, \dots, m-1 \}$ and } \  \frac{1}{5} .$$
By symmetry, this also defines the Perron vector for the remaining indices between $m+1$ and $2m$ since
	$\pi_i = \pi_{n -i +1}$.
Then we easily arrive at
	$$ \alpha ( B) = \pi_m \,  B_{ m \,1} = \frac{1}{5 \, . \, 2^{m-1}} ,$$
	which directly gives the following analytic bound in Proposition~\ref{prop:specgapgen}
	$$  \an = 1- \frac{1}{5 (2m-1) \,  2^{m-1}} .$$
	
For $\kbcb(  B^{\dag}   B )$ and $\kbcb(  B^{\dag}   B )$, we compute the estimates $\bcb( B^{\dag}   B )$
	and $\DS( B^{\dag}   B )$ given by
	$$ \kbcb(  B^{\dag}   B ) \leq \frac{ \alpha(  B^{\dag}   B )}{\diamnorm (H)}
	\ \mbox{ and } \ 
	\kbcb(  B^{\dag}   B ) \leq \frac{ \alpha(  B^{\dag}   B )}{\diam(H)  (\pi_{\max})^2    b(H)} ,$$
	where $H = G_{  B^{\dag}   B} $. 
The directed graph $H$ consists in two  cliques with the sets of nodes $1,2 , \dots ,m$ and 
	$\overline{1}, \overline{2} , \dots ,\overline{m}$ , connected by the  edges 
	$(m-1, \overline{m})$, $(m , \overline{m-1})$, $(m , \overline{m})$ and the three edges in the
	reverse direction.
Thus $H$ has $2(m^2 + 3)$ edges,  $\dmax(H)  = m+2$, $\diam(H) =3$, and  $ \diamnorm (H) = 1$.
The bottleneck measure is $\bottle(H) = m/3$.
A rather tedious computation gives
	$$ \alpha ( B^{\dag}   B ) = \pi_{m-1}  \,  B^{\dag}   B_{ (m -1) \, m} = \frac{1}{3 \, . \, 5 \, .  \, 2^{m-1}} .$$
Since $\pi_{\max} = 1/5$,  we arrive at the two following geometric bounds 
	$$ \bcb = 1- \frac{1}{3 \, . \, 5 \, .  \, 2^{m-1}}   \  \mbox { and } \  
	    \DS =  1-  \frac{5}{3 \, m   \, 2^{m-1}} . $$
The bound $\bcb$ is better than  both $ \an$ and $ \DS  $. 
Thus we arrive at  
	$$ \varrho \leq  1- \frac{1}{3 \, . \, 5 \, .  \, 2^{m-1}} \ . $$

 The subset $S= \{1,2 , \dots ,m \}$ satisfies $\pi(S) = 1/2$ and 
	$$  \sum_{i\in S, j\notin S} \pi_i B^{\dag}   B_{i \, j}  = \frac{1}{5.2^{m-2}} .$$
The lower bound in Cheeger's inequalities gives 	
	$$\lambda_2( B^{\dag}   B) \geq 1 - \frac{1}{5.2^{m-4} } .$$
This lower bound is of the same order as $\bcb$, which shows that  the convergence rate
	of the EqualNeighbor algorithm is  $1- \theta \big(2^{-m}\big)  $.
		
\begin{figure}
$$\begin{array}{|c||c|c|c|}
\hline
\mbox{ }  & \mbox{ EqualNeighbor } & \mbox{ FixedWeight } & \mbox{ Metropolis  } \\
\hline \hline
\mbox{  ring} & 1 - \frac{16}{3n^2} & 1 - \frac{2}{Qn} & 1 - \frac{16}{3n^2}   \rule[-7pt]{0pt}{20pt} \\ 
\hline 
\mbox{ hypercube } & \ \ \  1 - \frac{1}{(\log_2 n)^2}  \ \ \ _{_{[DS]} }  & 1 - \frac{1}{Q} 
                                       & \ \ \ \ \ 1 - \frac{1}{(\log_2 n)^2}  \ \ \ _{_{[DS]} }    \\
\hline 
\mbox{ star } & \ \ \ \ \  1 - \frac{1}{6n}  \ \ \ _{_{[b]} }  & 1 - \frac{1}{2q_{\max}} & \ \ \ \ \ 1 - \frac{1}{3n}  \ \ \ _{_{[DS]} }   \rule[-7pt]{0pt}{20pt} \\ 
\hline
\mbox{ two-star } & \ \ \ \ \  1 - \frac{1}{9 n}  \ \ \ _{_{[b]} }   & 1 - \frac{1}{3 Q} & 1 - \frac{8}{3n^2}   \rule[-7pt]{0pt}{20pt} \\ 
\hline
\mbox{ binary tree} &   1- \frac{1}{4 n \log_2 n } & 1 - \frac{2}{Qn} &  1- \frac{1}{2 n \log_2 n }   \rule[-7pt]{0pt}{20pt} \\ 
\hline 
\mbox{ grid } &  1 - \frac{2}{5n\sqrt{n}}  & 1 - \frac{1}{Q\sqrt{n}}  & 1 - \frac{2}{5n\sqrt{n}} \rule[-7pt]{0pt}{20pt}  \\
\hline 
\mbox{ barbell }   &  1 - \frac{8}{(n+1)^3}       &1 - \frac{2}{Q n}  & 1 - \frac{1}{(n+1)^2}        \rule[-7pt]{0pt}{20pt}  \\
\hline
\end{array}$$
\caption{Bounds for networked systems with $n$ agents and bidirectional  links.}
\label{fig:examples}
\end{figure}

\paragraph{Acknowledgements.} 
I  thank  Eric Fuzy  and Patrick Lambein-Monette for useful discussions, 
	and Jean-Beno\^\i t Bost and Rapha\"el Bost for their help during the completion of this paper.

\bibliographystyle{plain}

\begin{thebibliography}{10}

\bibitem{Alb97}
Michael~O. Albertson.
\newblock The irregularity of a graph.
\newblock {\em Ars Combinatoria}, 46:219--225, 1997.

\bibitem{CMA08a}
Ming Cao, A.~Stephen Morse, and Brian D.~O. Anderson.
\newblock Reaching a consensus in a dynamically changing environment: a
  graphical approach.
\newblock {\em SIAM Journal on Control and Optimization}, 47(2):575--600, 2008.

\bibitem{CBLM20}
Bernadette Charron-Bost and Patrick Lambein-Monette.
\newblock Consensus: A little learning goes a long way.
\newblock In preparation, 2020.

\bibitem{Cha11}
Bernard Chazelle.
\newblock The total $s$-energy of a multiagent system.
\newblock {\em SIAM Journal on Control and Optimization}, 49(4):1680--1706,
  2011.

\bibitem{CS07}
Felipe Cucker and Steve Smale.
\newblock Emergent behavior in flocks.
\newblock {\em IEEE Transactions on Automatic Control}, 52:852--862, 2007.

\bibitem{Dia88}
Persi Diaconis.
\newblock {\em Group representations in probability and statistics}, volume~11
  of {\em Lecture Notes--Monograph Series}.
\newblock Springer, 1988.

\bibitem{DS91}
Persi Diaconis and Daniel Stroock.
\newblock Geometric bounds for eigenvalues of {M}arkov chains.
\newblock {\em The Annals of Applied Probability}, 1(1):36--61, 1991.

\bibitem{HL94}
D.~Frank Hsu and Tomasz Luczak.
\newblock On the $k$-diameter of $k$-regular $k$-connected graphs.
\newblock {\em Discrete Mathematics}, 133:291--296, 1994.

\bibitem{IKOY03}
Satoshi Ikeda, Izumi Kumo, Norihiro Okumoto, and Masafumi Yamashita.
\newblock Impact of local topological information on random walks on finite
  graphs.
\newblock In {\em Proceedings of the 30th International Colloquium on Automata,
  Languages, and Programming, ICALP03}, volume 2719 of {\em Lecture Notes in
  Computer Science}, pages 1054--1067. Springer, 2003.

\bibitem{Kol36}
Alexander Kolmogoroff.
\newblock Zur {T}heorie der {M}arkoffschen {K}etten.
\newblock {\em Mathematische Annalen}, 112(1):155--160, 1936.

\bibitem{LO81}
Henry~J. Landau and Alexander~M. Odlyzko.
\newblock Bounds for eigenvalues of certain stochastic matrices.
\newblock {\em Linear Algebra and Its Applications}, 38:5--15, 1981.

\bibitem{Lub89}
Alexander Lubotzky.
\newblock {\em Discrete Groups, Expanding Graphs and Invariant Measures},
  volume~11 of {\em Lecture Notes Monograph Series}.
\newblock Springer, 1989.

\bibitem{Mor05}
Luc Moreau.
\newblock Stability of multiagent systems with time-dependent communication
  links.
\newblock {\em IEEE Transactions on Automatic Control}, 50(2):169--182, 2005.

\bibitem{NL17}
Angelia Nedi\'c and Ji~Liu.
\newblock On convergence rate of weighted-averaging dynamics for consensus
  problems.
\newblock {\em IEEE Transactions on Automatic Control}, 62(2):766--781, 2017.

\bibitem{NOR18}
Angelia Nedic, Alex Olshevsky, and Michael~G. Rabbat.
\newblock Network topology and communication-computation tradeoffs in
  decentralized optimization.
\newblock {\em Proceedings of the {IEEE}}, 106(5):953--976, 2018.

\bibitem{NOOT09}
Angelia Nedic, Alexander Olshevsky, Asuman~E. Ozdaglar, and John~N. Tsitsiklis.
\newblock On distributed averaging algorithms and quantization effects.
\newblock {\em {IEEE} Transactions on Automatic Control}, 54(11):2506--2517,
  2009.

\bibitem{NOSY10}
Yoshiaki Nonaka, Hirotaka Ono, Kunihiko Sadakane, and Masafumi Yamashita.
\newblock The hitting and cover times of metropolis walks.
\newblock {\em Theoretical Computer Science}, 411(16--18):1889--1894, 2010.

\bibitem{Ols17}
Alex Olshevsky.
\newblock Linear time average consensus and distributed optimization on fixed
  graphs.
\newblock {\em SIAM Journal on Control and Optimization}, 55(6):3990--4014,
  2017.

\bibitem{OT11}
Alex Olshevsky and John~N. Tsitsiklis.
\newblock Convergence speed in distributed consensus and averaging.
\newblock {\em SIAM Review}, 53(4):747--772, 2011.

\bibitem{OT13}
Alex Olshevsky and John~N. Tsitsiklis.
\newblock Degree fluctuations and the convergence time of consensus algorithms.
\newblock {\em IEEE Transactions on Automatic Control}, 58(10):2626--2631,
  2013.

\bibitem{Sin92}
Alistair~J. Sinclair.
\newblock Improved bounds for mixing rates of. markov chains and multicommodity
  flow.
\newblock {\em Combinatorics, Probability, \& Computing}, 1:351--370, 1992.

\bibitem{SJ89}
Alistair~J. Sinclair and Mark~R. Jerrum.
\newblock Approximate counting, uniform generation, and rapidly mixing {M}arkov
  chains.
\newblock {\em Information and Computation}, 82:93--133, 1989.

\bibitem{XB04}
Lin Xiao and Stephen Boyd.
\newblock Fast linear iterations for distributed averaging.
\newblock {\em Systems \& Control Letters}, 53(1):65--78, 2004.

\end{thebibliography}

\end{document}